\theoremstyle{plain}
\newtheorem{theorem}{Theorem}
\newtheorem{lemma}[theorem]{Lemma}
\newtheorem{proposition}[theorem]{Proposition}
\newtheorem{definition}[theorem]{Definition}
\newtheorem{conjecture}[theorem]{Conjecture}
\DeclareMathOperator*{\argmax}{arg\,max}
\theoremstyle{definition}
\newtheorem{remark}{Remark}
\newcommand{\lele}{\textcolor{black}}
\newcommand{\mic}{\textcolor{black}}
\newcommand\blfootnote[1]{%
  \begingroup
  \renewcommand\thefootnote{}\footnote{#1}%
  \addtocounter{footnote}{-1}%
  \endgroup
}
\begin{document}
%
\title{Information-Theoretic Limits on \\
        Exact Subgraph Alignment Problem}
%
%
%

\author{Chun Hei Michael Shiu\blfootnote{Chun Hei Michael Shiu is with the Department of Electrical and Computer Engineering, University of British Columbia, Vancouver, BC V6T1Z4, Canada (email: shiuchm@student.ubc.ca).}, Hei Victor Cheng\blfootnote{Hei Victor Cheng is with the Department of Electrical and Computer Engineering, Aarhus University, Denmark (email: hvc@ece.au.dk).}, and Lele Wang\blfootnote{Lele Wang is with the Department of Electrical and Computer Engineering, University of British Columbia, Vancouver, BC V6T1Z4, Canada (email: lelewang@ece.ubc.ca).}  
}

\maketitle

\begin{abstract}
The graph alignment problem aims to identify the vertex correspondence between two correlated graphs. Most existing studies focus on the scenario in which the two graphs share the same vertex set. However, in many real-world applications, such as computer vision, social network analysis, and bioinformatics, the task often involves locating a \emph{small} graph pattern within a \emph{larger} graph. Existing graph alignment algorithms and analysis cannot directly address these scenarios because they are not designed to identify the specific subset of vertices where the small graph pattern resides within the larger graph.
Motivated by this limitation, we introduce the \emph{subgraph alignment problem}, which seeks to recover both the vertex set and/or the vertex correspondence of a small graph pattern embedded in a larger graph. In the special case where the small graph pattern is an induced subgraph of the larger graph and both the vertex set and correspondence are to be recovered, the problem reduces to the \emph{subgraph isomorphism problem}, which is NP-complete in the worst case. In this paper, we formally formulate the subgraph alignment problem by proposing the Erd\"os-R\'enyi subgraph pair model together with some appropriate recovery criterion. We then establish almost-tight information-theoretic results for the subgraph alignment problem and present some novel approaches for the analysis.
\end{abstract}

\begin{IEEEkeywords}
Information-theoretic limit, Erd\"os-R\'enyi Graphs, Subgraph Alignment, Structural Entropy, Large Deviations in Graphs
\end{IEEEkeywords}

%
\IEEEpeerreviewmaketitle

\section{Introduction}

\blfootnote{This work was presented in part at the 2025 IEEE International Symposium on Information Theory \cite{shiu25} and submitted in part to the 2026 IEEE International Symposium on Information Theory.}

\IEEEPARstart{T}{he} graph alignment problem, also known as the graph matching problem or noisy graph isomorphism problem, has received increasing attention over the last decade due to its wide applications across multiple fields such as natural language processing~\cite{hag05,pru15}, bioinformatics~\cite{sin08,kou13,rau19}, and social network de-anonymization~\cite{nar09,kor14}, among others. The graph alignment problem aims to find the best vertex correspondence between two correlated graphs with respect to a certain criterion.
Extensive studies have been done to establish the information-theoretic limits~\cite{ped11, cul16, cul17, wu22} and efficient algorithms~\cite{dai19,fan20,din21,mao21,din23,mao23a,mao23b} for the graph alignment problem. Variations such as seeded alignment~\cite{kor14,yar13,mos20,shi17}, database alignment~\cite{cul18}, and alignment with additional attribute information~\cite{zha24,wan23,wan24} have also been studied.

Most existing work on graph alignment focuses on scenarios where the two graphs have the same number of vertices. However, in many real-world applications, such as computer vision \cite{lan23,sho00,lla01}, social network analysis \cite{fan2012,das20,ma18}, and bioinformatics \cite{borgelt2002,tia07,zha09}, the problem often involves locating a small graph pattern within a larger graph. 
For example, in chemistry and biology, when searching for a  specific (properties indicating) molecules in a large structure, the goal is to identify and locate the set of vertices of the subgraph representing the molecules \cite{borgelt2002}.
In applications such as social network analysis, finding both the user set (represented as vertices) and their vertex correspondence is crucial for tasks like user profiles matching \cite{fan2012}. 
Existing graph alignment analysis and algorithms are not well-suited for these scenarios, as they are not designed to identify the specific subset of vertices where the small graph pattern is embedded within the larger graph. Therefore, the task of locating a small graph pattern within a larger graph is not a trivial extension of the graph alignment problem.
These challenges motivate the formulation of the \emph{subgraph alignment problem}, which aims to recover the vertex set and/or the vertex correspondence of a small graph pattern embedded within a larger graph.

A closely related problem from theoretical computer science is the subgraph isomorphism problem, which studies whether a given pattern exists as an induced subgraph in a larger base graph. The subgraph isomorphism problem is known to be NP-complete \cite{har82}. Nevertheless, it has been extensively investigated over the past decades in both theory and practice due to its wide range of applications. Several influential works, including efficient heuristics and algorithms such as QuickSI \cite{sha08}, GraphQL \cite{he10}, have been developed to address subgraph alignment tasks in real-life applications. 

Another closely related problem is the graph isomorphism problem, which belongs to NP, but is not known to be either in P or NP-complete. The current state of the art of the graph isomorphism problem for all graphs is Babai's quasi-polynomial time algorithm in \cite{bab16}. Moreover, several studies have observed that the average-case complexity \cite{mck81,sch76,sto19} of the graph isomorphism problem under random graph models can be polynomial time, suggesting that an average-case complexity analysis can yield more optimistic results than a worst-case complexity analysis. This motivates us to study the subgraph isomorphism problem from a similar average-case perspective. While it is well-understood that the subgraph isomorphism problem is computationally intractable in the worst-case, its average-case behavior remains far less explored. Therefore, our work (subgraph alignment problem) focuses on analyzing the conditions under which a random subgraph can be reliably located within a base graph, by allowing a vanishing probability of error as the graph size grows.

To investigate the subgraph alignment problem, we introduce the \emph{Erd\"os-R\'enyi subgraph pair model}. For a pair of graphs $(G, H_\pi)$, generated from the Erd\"os-R\'enyi subgraph pair model, we first generate $G$ to be the base Erd\"os-R\'enyi graph on vertex set $[n]$ according to the Erd\"os-R\'enyi measure. Then we choose $m$ vertices from $[n]$ uniformly at random, let $\mathcal{S}$ denote the collection of the chosen vertices, and let $H$ be the induced subgraph on the vertices $\mathcal {S}$. Next, the graph $H$ is anonymized by applying a random bijection $\pi$ from the vertices $\mathcal S$ to $[m]$, and the anonymized graph is denoted as $H_\pi$. Under this formulation, one typical goal is to achieve \emph{exact set recovery}, i.e., recovering the set $\mathcal S$. Moreover, conditioning on having exactly recovered $\mathcal S$, another typical goal is to achieve \emph{exact permutation recovery}, i.e., recovering the set $\mathcal S$ and the permutation $\pi$ at the same time.

Under our Erd\"os-R\'enyi subgraph pair model, we use a brute-force estimator for estimating the vertex set $\mathcal S$ and the bijection $\pi$, and it turns out that the brute-force estimator is equivalent to the maximum a posterior (MAP) estimator. We establish the achievability and converse results for both exact set recovery and exact permutation recovery. Moreover, under certain mild conditions, we establish the following asymptotically tight thresholds for achievability and converse results for exact set recovery \mic{and exact permutation recovery respectively} as $n \to \infty$ (See Theorem \ref{thm: ac pair for set recovery} \mic{and Theorem \ref{thm: ac pair for exact permutation recovery}} for the formal statement).

The main contributions of this paper are summarized as follows.

\begin{enumerate}
    \item \textbf{\emph{Model Formulation.}} We propose the Erd\"os-R\'enyi subgraph pair model, which integrates ideas from the subgraph isomorphism problem and allows randomness in the subgraph, rather than assuming a fixed planted structure. This model formulation enables us to locate a random subgraph structure within a larger random base graph. Moreover, we may also be interested in recovering the vertex labels of the subgraph once it is located, which is often considered in certain applications.

    \item \textbf{\emph{Information theoretic limits.}} We establish achievability and converse results on both recovery criteria: (i) locating the random subgraph; (ii) simultaneously locating the random subgraph and aligning its vertex labels. Our results are tight in certain parameter regimes. 
    
    \item \textbf{\emph{Proof techniques.}} The main techniques for the achievability results are based on probabilistic methods that are widely used in the random graph theory. In particular, we adopt the first-moment method and typicality analysis. For the converse results, we proposed a novel information-theoretic approach to study the infeasibility of vertex-subset recovering, which may be of independent interest.
\end{enumerate}

\subsection{Related Work}

The literature on graph alignment and its extensions is vast. Therefore, we only highlight works most relevant to subgraph alignment, which aligns with the theme of the present paper.

One classical problem in random graph theory that is related to subgraph alignment is the subgraph containment problem, which seeks to determine the threshold at which a fixed pattern $H$ is contained in an Erd\"os-R\'enyi graph. In the 1960s, Erd\"os and R\'enyi established the threshold for balanced graphs in \cite{erd60}. Decades later, Bollob\'as extended these results to a threshold for any graph $H$ in \cite{bol11}. The key difference between subgraph containment and subgraph alignment is that the latter requires the subgraph to have the exact structural correspondence to the pattern graph, while subgraph containment only requires inclusion. Consequently, subgraph alignment imposes a stricter condition. In many applications, such as protein-protein interaction networks in bioinformatics \cite{she12}, it is often necessary to find subgraphs that exactly have the given structure. This motivates the study of the subgraph alignment problem as a natural and more restrictive generalization of the subgraph containment problem.

Another interesting problem in random graph theory is the planted clique problem, which is also closely connected to subgraph alignment. In this problem, one starts with an Erd\"os-R\'enyi random graph with edge probability $p = \frac{1}{2}$ and plants a clique of size $k$ on an arbitrary subset of vertices in the base graph. The objective is to identify the location of the planted clique given only the resulting graph. Both the information-theoretic and algorithmic aspects of the planted clique problem have been studied extensively. It is known that the planted clique problem is infeasible if $k \leq 2\log n$ \cite{bol76}. From an algorithmic perspective, the AKS (Alon-Krevich-Sudakov) spectral algorithm in \cite{alo98} succeeds in detecting and recovering the planted clique for $k \geq n^{\frac{1}{2}+\epsilon}$ in polynomial time for any $\epsilon = \Omega(1)$. On the other hand, it is conjectured that the planted clique problem is computationally hard (specifically, NP-Hard) if $k \leq n^{\frac{1}{2}- \epsilon}$.

The main limitation of the planted clique problem is that it considers only the clique case, i.e., the target subgraph is always a complete graph\mic{\cite{aro09, kuv95, alo98}}. Therefore, some studies generalize the planted clique problem to the planted subgraph problem, where the objective is to identify the location of a planted subgraph in a larger base graph, given only the resulting graph and the underlying planted pattern. A tight information-theoretic threshold for the planted subgraph problem has been established in \cite{mos23, lee25} under a mild density assumption on the subgraphs. However, these studies still assume the subgraph patterns are chosen then planted on the base graph. To the best of our knowledge, there are no existing results on the information-theoretic limits on subgraph recovery where the subgraph itself contains inherent randomness.

\subsection{Overview of the Paper}

This paper is structured as follows. We conclude this introductory section by presenting some of the notation used throughout the paper. In Section \ref{sec: problem formulation}, we formulate the subgraph alignment problem by introducing the proposed Erd\"os-R\'enyi subgraph pair model and the corresponding recovery criterion. In Section \ref{sec: main results}, we state the main results concerning the information-theoretic limits of the subgraph alignment problem, and provide the proofs of the main theorems in Section \ref{sec: proof}. In Section \ref{sec: conclusion}, we conclude the paper and discuss potential future research directions. Finally, the appendix contains some additional proofs.

\subsection{Notation}

In this paper, we use $G \sim \mathrm{ER}(n,p)$ to denote the Erd\"os-R\'enyi model on vertex set $[n] \triangleq \{1,2,\ldots, n\}$, where for any two vertices $1 \leq u,v \leq n$, the edge $\{u,v\}$ is generated i.i.d.~$\mathrm{Bern}(p)$. 

Denote the complement graph of $G$ by $G^c$.

Given a graph $G = (V_G, E_G)$, define the following quantities associated with $G$:\footnote{When $G$ is a random graph, these associated quantities are random variables.}
\begin{itemize}
    \item $v_G = |V_G|$ as the number of vertices in $G$;
    \item $e_G = |E_G|$ as the number of edges in $G$;
    \item $\mathrm{Aut}_G = \left\{\text{bijective }\pi : V \to V \mid  \{u,v\} \in E_G \Leftrightarrow\{\pi(u), \pi(v)\} \in E_G \right\}$ as the number of automorphism of $G$.
\end{itemize}

For a vertex subset $\mathcal S \subseteq [n]$, let $G[\mathcal S]$ denote the induced subgraph on $\mathcal S$. That is, the vertex set of $G[\mathcal S]$ is $\mathcal S$, and for $u, v \in \mathcal S$, $u$ and $v$ are adjacent in $G[\mathcal S]$ if and only if they are adjacent in $G$.

For two graphs $G_1 = (V_1, E_1)$ and $G_2 = (V_2, E_2)$, we write $G_1 \cong G_2$ if $G_1$ is isomorphic to $G_2$, i.e., there exists a bijection $\pi : V_1 \to V_2$ such that $\{u,v\} \in E_1$ if and only if $\{\pi(u), \pi(v)\} \in E_2$. 

We use the standard order notation:
\begin{table}[h!]
    \centering
    \renewcommand{\arraystretch}{1.5}
    \begin{tabular}{c@{\hskip 1cm}c} 
        \toprule
        \textbf{Notation} & \textbf{Definition} \\ 
        \hline 
        $f(n) = \omega(g(n))$ & $\lim_{n \to \infty} \frac{f(n)}{g(n)} = \infty$ \\ 
        $f(n) = o(g(n))$ & $\lim_{n \to \infty} \frac{f(n)}{g(n)} = 0$ \\ 
        $f(n) = O(g(n))$ & $\limsup_{n \to \infty} \frac{f(n)}{g(n)} < \infty$ \\ 
        $f(n) = \Omega(g(n))$ & $\liminf_{n \to \infty} \frac{f(n)}{g(n)} > 0$ \\ 
        $f(n) = \Theta(g(n))$ & $f(n) = O(g(n)) \text{ and } f(n) = \Omega(g(n))$ \\ 
        \bottomrule
    \end{tabular}
\end{table}

Unless stated otherwise, all standard order notation is used under the implicit assumption that $n \to \infty$. Sometimes, we also write $f(n) \asymp g(n)$ if $f(n) = \Theta(g(n))$. All logarithms are assumed to be base $e$, $\log(\cdot) = \log_e(\cdot)$. We use $h(p) = h_e(p) = -p \log p - (1-p) \log (1-p)$ to denote the binary entropy function base $e$. We use $\mathbf{H}(\cdot)$ to denote both the structural entropy of a random structure and the Shannon entropy base $e$. We use $D_{\mathrm{KL}}(\cdot \| \cdot)$ to denote the Kullback-Leibler divergence (relative entropy) base $e$.

\section{Problem Formulation} \label{sec: problem formulation}

In this section, we provide a mathematical formulation for the subgraph alignment problem. We describe the Erd\"os-R\'enyi subgraph pair model in Section \ref{sec: subgraph model}. Under this model formulation, we formally define the subgraph alignment problem in Section \ref{sec: recovery criterion}.

\subsection{Erd\"os-R\'enyi Subgraph Pair Model} \label{sec: subgraph model}

For positive integers $m,n$ such that $m < n$, and real number $p\in [0,1]$, we define the \emph{Erd\"os-R\'enyi subgraph pair model} $(G, H_\pi) \sim \mathcal G(n,m,p)$ by the following steps:

\begin{enumerate}[label = Step \arabic*., leftmargin=*]
    \item (Base graph generating) Let $G \sim \mathrm{ER}(n,p)$ be the Erd\"os-R\'enyi base graph on the vertex set $[n]$.
    \item (Subgraph extracting) Choose $m$ vertices from $[n]$ uniformly at random, let $\mathcal S$ denote the collection of chosen vertices. Let $H = G[\mathcal S]$ be the induced subgraph of $G$ on $\mathcal S$.
    \item (Label anonymizing) Relabel the vertices of $H$ using a random bijection $\pi: \mathcal S \to [m]$, where $\pi$ is chosen uniformly over all $m!$ bijections from $\mathcal S$ to $[m]$. Denote the relabeled graph as $H_\pi$.
\end{enumerate}

Without loss of generality, we assume that the stochasticity in choosing $\mathcal S$ from $[n]$ is independent of the stochasticity in choosing the random bijection $\pi$ from all possible bijections from $\mathcal S$ to $[m]$. An illustration of the model is given in Figure \ref{fig: model illustration}. 

\begin{figure}[htbp]
\centering
\begin{tikzpicture}[scale=0.75, every node/.style={scale=0.9}]

\node at (0.7, 6.0) {$G \sim \mathrm{ER}(n,p)$};

\draw (-2.5, 4) circle (0.5); \node at (-2.5,4) {1}; 
\draw (-0.5, 5) circle (0.5); \node at (-0.5,5) {2}; 
\draw (2, 4.5) circle (0.5); \node at (2,4.5) {3};  
\draw (4, 4) circle (0.5); \node at (4,4) {4};  
\draw (-2, 2) circle (0.5); \node at (-2,2) {5};  
\draw (1, 2) circle (0.5); \node at (1,2) {6};  
\draw (0, 0) circle (0.5); \node at (0,0) {7};  
\draw (2, 0) circle (0.5); \node at (2,0) {8};  
\draw (3.8, 0.5) circle (0.5); \node at (3.8,0.5) {9};  

\draw[-] (-2.4, 3.52) -- (-2.18, 2.46); 
\draw[-] (-0.0, 4.9) -- (1.5, 4.6); 
\draw[-] (2.5, 4.45) -- (3.52, 4.1); 
\draw[-] (1.85, 4.03) -- (1.2, 2.47); 
\draw[-] (2, 4) -- (2, 0.5); 
\draw[-] (3.84, 3.54) -- (2.21, 0.44); 
\draw[-] (-1.65, 1.65) -- (-0.35, 0.35); 
\draw[-] (0.8, 1.55) -- (0.19, 0.45); 
\draw[-] (1.2, 1.55) -- (1.79, 0.44); 

\draw[->, thick] (5.2, 2.5) -- (8.2, 2.5);
\node at (7-0.3, 3.0) {extract $\mathcal{S} \subseteq [n]$};

\node at (11-0.3, 4.5) {$H = G[\mathcal{S}]$};

\draw (10-0.3-0.2, 3.5) circle (0.5); \node at (10-0.3-0.2,3.5) {\textcolor{red}{3}};
\draw (12-0.3-0.2, 3.5) circle (0.5); \node at (12-0.3-0.2,3.5) {\textcolor{blue}{4}};
\draw (10-0.3-0.2, 1.5) circle (0.5); \node at (10-0.3-0.2,1.5) {\textcolor{olive}{6}};
\draw (12-0.3-0.2, 1.5) circle (0.5); \node at (12-0.3-0.2,1.5) {\textcolor{magenta}{8}};

\draw[-] (10.5-0.3-0.2, 3.5) -- (11.5-0.3-0.2, 3.5); 
\draw[-] (10-0.3-0.2, 3) -- (10-0.3-0.2, 2); 
\draw[-] (10.4-0.3-0.2, 3.2) -- (11.6-0.3-0.2, 1.8); 
\draw[-] (12-0.3-0.2, 3) -- (12-0.3-0.2, 2); 
\draw[-] (10.5-0.3-0.2, 1.5) -- (11.5-0.3-0.2, 1.5); 

\draw[->, thick] (13.5-0.3-0.5, 2.5) -- (16-0.3, 2.5);
\node at (14.7-0.3-0.2, 3.5) {vertex bijection $\pi$};
\node at (14.7-0.3-0.2, 3.0) {from $\mathcal{S}$ to $[m]$};

\node at (18-0.3, 4.5) {$H_\pi$};

\draw (17-0.3, 3.5) circle (0.5); \node at (17-0.3,3.5) {\textcolor{red}{1}};
\draw (19-0.3, 3.5) circle (0.5); \node at (19-0.3,3.5) {\textcolor{blue}{2}};
\draw (17-0.3, 1.5) circle (0.5); \node at (17-0.3,1.5) {\textcolor{olive}{3}};
\draw (19-0.3, 1.5) circle (0.5); \node at (19-0.3,1.5) {\textcolor{magenta}{4}};

\draw[-] (17.5-0.3, 3.5) -- (18.5-0.3, 3.5); 
\draw[-] (17-0.3, 3) -- (17-0.3, 2); 
\draw[-] (17.4-0.3, 3.2) -- (18.6-0.3, 1.8); 
\draw[-] (19-0.3, 3) -- (19-0.3, 2); 
\draw[-] (17.5-0.3, 1.5) -- (18.5-0.3, 1.5); 

\end{tikzpicture}   
\caption{Example of Erd\"os-R\'enyi subgraph pair model with $(n,m,p) = (9,4,0.3)$: $G \sim \mathrm{ER}(n,p)$ is generated according to the Erd\"os-R\'enyi measure. Induced subgraph $H$ is obtained through the vertex subset $\mathcal S = \{3,4,6,8\}$. Anonymized subgraph $H_\pi$ is obtained through applying the bijection $\pi: \mathcal S \to [m]$ defined as $\pi(3) = 1$, $\pi(4) = 6$, $\pi(6) = 3$, $\pi(8) = 4$. Note that the receiver will only be given the base graph $G$ and the pattern $H_\pi$, without knowing the bijection $\pi$.}  
\label{fig: model illustration}
\end{figure}
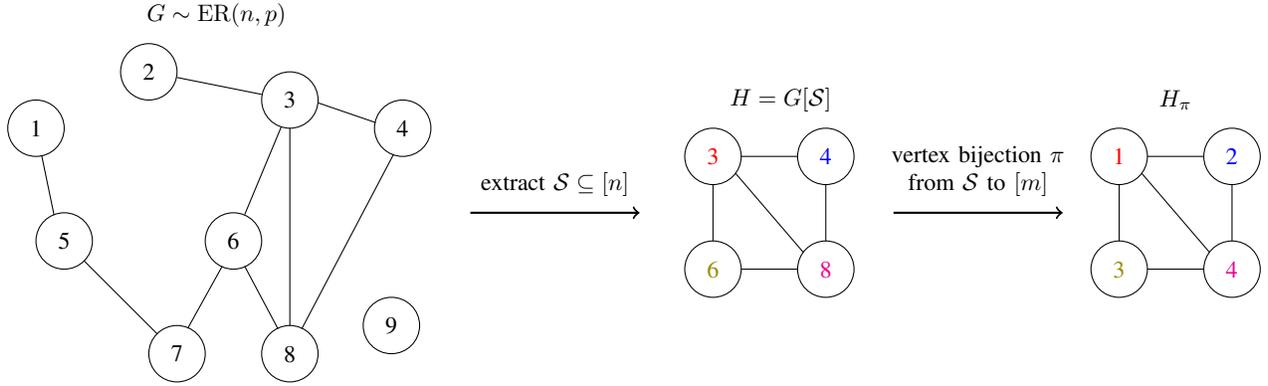

Throughout the remainder of the paper, we work on sequences of (un-subsampled) Erd\"os-R\'enyi subgraph pair model\footnote{A more generalized Erd\"os-R\'enyi subgraph pair model is defined in \cite{shiu25}} with parameters $(n,m(n),p(n))$ such that $\mathbb{N} \ni m(n) < n$ and $p(n) \in [0,1]$ for all $n \in \mathbb{N}$. We are particularly interested in the asymptotic behavior as $n \to \infty$.

\subsection{Recovery Criterion} \label{sec: recovery criterion}

The goal of the subgraph alignment problem is to design estimators $\hat{\mathcal S}(G, H_\pi)$ for recovering the set of vertices $\mathcal S$ and $\hat{\pi}(G, H_\pi)$ for recovering the bijection $\pi$ as functions of $G$ and $H_\pi$. We now formally define the criteria in the subgraph alignment problem.

\textbf{Criterion 1.} (Exact Set Recovery) We say \emph{exact set recovery} is achievable \emph{with high probability (w.h.p.)} if there exists $\hat{\mathcal S}(G, H_\pi)$ such that 
\begin{align*}
    \lim_{n \to \infty} \Pr (\mathcal S = \hat{\mathcal S}) = 1.
\end{align*}

\textbf{Criterion 2.} (Exact Permutation Recovery) We say that \emph{exact permutation recovery} is achievable w.h.p. if there exists $\hat{\mathcal S}(G, H_\pi)$ and $\hat{\pi}(G, H_\pi)$ such that 
\begin{align*}
    \lim_{n\to\infty} \Pr(\mathcal S = \hat{\mathcal S}, \pi = \hat{\pi}) = 1.
\end{align*}

Besides the above criteria, a more general list of recovery criteria for the subgraph alignment problem is described in \cite{shiu25}. In this paper, we will focus on the achievability and converse on both exact set recovery (Criterion 1) and exact permutation recovery (Criterion 2) for the Erd\"os-R\'enyi subgraph pairs $(G, H_\pi) \sim \mathcal G(n,m,p)$. Other cases are left as directions for future research.

\section{Main Results} \label{sec: main results}
In this section, we highlight the main results. Theorem \ref{thm: set recovery achievability} and Theorem \ref{thm: perm recovery achievability} provide an achievable region for exact set recovery and exact permutation recovery, respectively. In Theorem \ref{thm: converse}, \mic{we establish the converse region for exact set recovery. We summarize the achievability-converse pairs and establish the tight threshold under certain conditions for exact set recovery and exact permutation recovery in Theorem \ref{thm: ac pair for set recovery} and Theorem \ref{thm: ac pair for exact permutation recovery} respectively}. The proofs of the achievability and converse results are given in Section \ref{sec: proof}.

\mic{In this work, we are interested in the information-theoretic limits of the exact subgraph alignment problem. Accordingly, we assume access to unlimited computational power and focus on whether reliable recovery is possible in principle. In particular, we use the theoretically most powerful estimator -- the brute-force estimator, which eventually boils down to exhaustively checking all possible subsets and labelings and outputs those that are consistent with the observations. In the following, for an Erd\"os-R\'enyi random graph $G \sim \mathrm{ER}(n,p)$ and $m \in \mathbb{N}$ with $m < n$, define
\begin{align*} 
    \Theta := \{(S, \sigma) : S \subseteq V \text{ with } |S| = m, \sigma: S \to [m] \text{ bijective}\}
\end{align*}
to be the set of all possible configurations in the Erd\"os-R\'enyi subgraph pair model described in Section \ref{sec: subgraph model}.}

\mic{
The brute-force estimator can be formulated as the following maximization problem.
}

\textcolor{black}{
    \begin{definition}[Brute-force Estimator] \label{defn: brute-force}
        Consider the Erd\"os-R\'enyi subgraph pair $(G, H_\pi) \sim \mathcal G(n,m,p)$. The output of the brute-force estimator is given by the set
        \begin{align} \label{brute force output}
            \mathcal O_{\mathrm{BF}} := \argmax_{(S, \sigma) \in \Theta} \mathbf{1}\{\sigma(G[S]) = H_\pi\}.
        \end{align}
    \end{definition}
}

\mic{Another commonly used estimator for graph recovery problems is the maximum a posteriori (MAP) estimator, which can be formulated as the following maximization problem.}

\textcolor{black}{
    \begin{definition}[MAP Estimator] \label{defn: map}
        Consider the Erd\"os-R\'enyi subgraph pair $(G, H_\pi) \sim \mathcal G(n,m,p)$. Let $(\mathcal S, \Sigma) \sim \mathrm{Unif}(\Theta)$, and let $(\mathcal G, \mathcal H)$ be the random variables that describe the graph realization from the Erd\"os-R\'enyi subgraph pair model described in Section \ref{sec: subgraph model}. The output of the MAP estimator is given by the set
        \begin{align} \label{MAP output}
            \mathcal O_{\mathrm{MAP}} := \argmax_{(S, \sigma) \in \Theta} \Pr\left(\mathcal S = S, \Sigma = \sigma \mid \mathcal G = G, \mathcal H = H_\pi\right).
        \end{align}
    \end{definition}
}

\mic{Under the Erd\"os-R\'enyi subgraph pair model, this brute-force estimator turns out to coincide with the MAP estimator, which is formally stated in the following proposition.}

\textcolor{black}{
    \begin{proposition} \label{prop: equivalence}
        Consider the Erd\"os-R\'enyi subgraph pair $(G, H_\pi) \sim \mathcal G(n,m,p)$. Let $\mathcal O_{\mathrm{BF}}$ and $\mathcal O_{\mathrm{MAP}}$ be the outputs from the brute-force estimator and MAP estimator in (\ref{defn: brute-force}) and (\ref{defn: map}), respectively. Then $\mathcal O_{\mathrm{BF}} = \mathcal O_{\mathrm{MAP}}$. 
    \end{proposition}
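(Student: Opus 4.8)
The plan is to show that the posterior probability $\Pr(\mathcal S = S, \Sigma = \sigma \mid \mathcal G = G, \mathcal H = H_\pi)$ is, up to a normalizing constant that does not depend on $(S,\sigma)$, exactly the indicator $\mathbf 1\{\sigma(G[S]) = H_\pi\}$, so that the two argmax sets coincide. First I would apply Bayes' rule: since $(\mathcal S, \Sigma) \sim \mathrm{Unif}(\Theta)$ by construction of the model (the subset is uniform among size-$m$ subsets and, independently, $\pi$ is uniform among the $m!$ bijections), the prior $\Pr(\mathcal S = S, \Sigma = \sigma)$ equals $1/|\Theta|$ for every $(S,\sigma) \in \Theta$. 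Hence
\begin{align*}
    \Pr(\mathcal S = S, \Sigma = \sigma \mid \mathcal G = G, \mathcal H = H_\pi)
    = \frac{\Pr(\mathcal G = G, \mathcal H = H_\pi \mid \mathcal S = S, \Sigma = \sigma)}{|\Theta|\,\Pr(\mathcal G = G, \mathcal H = H_\pi)},
\end{align*}
and the denominator is a positive constant independent of $(S,\sigma)$ (it is the prior-averaged likelihood, which is strictly positive whenever $(G,H_\pi)$ is a realizable outcome, and the estimators are only evaluated on realized inputs). So it remains to analyze the likelihood in the numerator.

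Next I would compute the likelihood $\Pr(\mathcal G = G, \mathcal H = H_\pi \mid \mathcal S = S, \Sigma = \sigma)$ explicitly. Conditioned on $\mathcal S = S$, the base graph $\mathcal G$ is still $\mathrm{ER}(n,p)$ on $[n]$, and the anonymized pattern $\mathcal H$ is the deterministic image $\sigma(\mathcal G[S])$ — there is no additional randomness once $\mathcal G$, $S$, and $\sigma$ are fixed. Therefore the event $\{\mathcal G = G, \mathcal H = H_\pi\}$ occurs if and only if $\{\mathcal G = G\}$ occurs \emph{and} $\sigma(G[S]) = H_\pi$; the latter is a deterministic condition on $(G, S, \sigma)$. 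Consequently
\begin{align*}
    \Pr(\mathcal G = G, \mathcal H = H_\pi \mid \mathcal S = S, \Sigma = \sigma)
    = \Pr(\mathcal G = G)\cdot \mathbf 1\{\sigma(G[S]) = H_\pi\},
\end{align*}
where $\Pr(\mathcal G = G) = p^{e_G}(1-p)^{\binom n 2 - e_G}$ does not depend on $(S,\sigma)$. Substituting back, the posterior is $\big(\text{const}\big)\cdot \mathbf 1\{\sigma(G[S]) = H_\pi\}$ with a strictly positive constant, so maximizing the posterior over $(S,\sigma) \in \Theta$ is identical to maximizing $\mathbf 1\{\sigma(G[S]) = H_\pi\}$ over $\Theta$, i.e. $\mathcal O_{\mathrm{MAP}} = \mathcal O_{\mathrm{BF}}$.

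The only subtlety — and the one point worth stating carefully — is degenerate edge cases: when $p \in \{0,1\}$ one has $\Pr(\mathcal G = G) \in \{0,1\}$, and more generally one should confirm that the argmax set in \eqref{MAP output} is well defined, i.e. the maximum of $\mathbf 1\{\sigma(G[S])=H_\pi\}$ over $\Theta$ is attained and equals $1$. This holds because the true configuration $(\mathcal S, \Sigma)$ that generated $(G, H_\pi)$ always satisfies $\Sigma(G[\mathcal S]) = H_\pi$, so on any realized input the maximum indicator value is $1$ and both argmax sets are the nonempty collection of \emph{all} $(S,\sigma)$ consistent with the observation; in particular the normalizing constant in the denominator is nonzero, which justifies dividing by it. I do not expect any real obstacle here — the argument is a direct Bayes-rule computation — but the care needed is to keep track of which quantities are $(S,\sigma)$-independent and to handle the trivial boundary values of $p$.
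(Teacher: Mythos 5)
Your proposal is correct and follows essentially the same route as the paper's proof: Bayes' rule with the uniform prior on $\Theta$, followed by factoring the likelihood as $\Pr(\mathcal G = G)\cdot\mathbf 1\{\sigma(G[S]) = H_\pi\}$ using the fact that $H_\pi$ is a deterministic function of $(G, S, \sigma)$, so that the posterior is proportional to the indicator. Your additional remarks on the nonemptiness of the argmax set and the degenerate values of $p$ are fine but not needed beyond what the paper already implicitly assumes.
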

}
\begin{proof}
    See Appendix \ref{app: equivalence}.
\end{proof}

\begin{algorithm}
    \caption{A Brute-force Algorithm for Estimating $\hat{\mathcal S}$ and $\hat{\pi}$} 
    \label{alg: brute-force algo} 
    \SetKw{Compute}{compute}
    \SetKwFor{Foreach}{for each}{do}{end for}
    \SetKwInOut{Input}{input}
    \SetKwInOut{Output}{output}
    \SetKw{Break}{break}
    \SetKw{Add}{add}
    
    \Input{Graphs $G$, $H_\pi$.}

    \Output{\mic{$\mathcal O$, a collection of set and permutation estimate $(\hat{\mathcal S},\hat{\pi})$.}}
    
    Let $A_{H_\pi}$ be the adjacency matrix of $H_\pi$

    $\mathcal O \gets \emptyset$

    \Foreach{size-$m$ subset $\mathcal S \subseteq [n]$}{
        Let $A_\mathcal S$ be the adjacency matrix of $G[\mathcal S]$
        
        \Foreach{bijection $\sigma$ from $\mathcal S$ to $[m]$}{
            Define $A_{\mathcal S_\sigma}$ as follows
            
            \For{$i, j \in \mathcal S$}{
                $A_{\mathcal S_\sigma}(\sigma(i), \sigma(j)) \gets A_{\mathcal S}(i,j)$
            }
        }
        
        \If{$A_{H_\pi} = A_{\mathcal S_\sigma}$} {
            \textcolor{black}{\Add $(\mathcal S, \sigma)$ to $\mathcal O$}
        }
    }
\end{algorithm}

\mic{The Erd\"os-R\'enyi subgraph pair model ensure that the output set $\mathcal O$ from Algorithm \ref{alg: brute-force algo} is always non-empty. If $\mathcal O$ contains at least two candidates, we design a rule for tie-breaking.}

\begin{theorem}[Exact Set Recovery Achievability] \label{thm: set recovery achievability} 
    Consider the Erd\"os-R\'enyi subgraph pair $(G, H_\pi) \sim \mathcal G(n,m,p)$. If 
    \begin{align} \label{set achievability condition}
        \frac{m}{2}h(p) - \log n \to \infty,
    \end{align}
    then the MAP estimator $\hat{\mathcal S}$ from Algorithm \ref{alg: brute-force algo} achieves exact set recovery w.h.p.
\end{theorem}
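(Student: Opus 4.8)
The plan is a first-moment (union-bound) argument controlled by the typical structure of the planted subgraph $H := G[\mathcal S]$. First I would pass, via Proposition~\ref{prop: equivalence}, to the brute-force output $\mathcal O_{\mathrm{BF}}$, which always contains the ground truth $(\mathcal S,\pi)$; hence, whatever the tie-breaking rule in Algorithm~\ref{alg: brute-force algo}, the returned set satisfies $\hat{\mathcal S} = \mathcal S$ as soon as every consistent pair $(S,\sigma)$ has $S = \mathcal S$. A pair $(S,\sigma)$ is consistent iff $\sigma(G[S]) = H_\pi = \pi(H)$, equivalently (with $\tau := \pi^{-1}\circ\sigma\colon S\to\mathcal S$) iff $\tau(G[S]) = H$, which in particular forces $G[S]\cong H$. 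It therefore suffices to show
\begin{align*}
    \Pr\bigl(\exists\, S\subseteq[n],\ |S| = m,\ S\neq\mathcal S,\ G[S]\cong H\bigr)\longrightarrow 0,
\end{align*}
and by Markov's inequality it is enough to prove that the expected number of pairs $(S,\tau)$, with $S\neq\mathcal S$ and $\tau\colon S\to\mathcal S$ a bijection with $\tau(G[S]) = H$, tends to $0$.

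Next I would condition on $H$ being \emph{typical}. Since $H\sim\mathrm{ER}(m,p)$, Chernoff bounds (with a union bound over subsets) give that with probability $1-o(1)$ the edge counts of $H$ and of its induced subgraphs concentrate within a $(1+o(1))$ factor of their means, that $|\mathrm{Aut}(H)| = 1$, and that $H$ is rigid in the relevant size ranges; the density forced by \eqref{set achievability condition} is exactly what makes these statements valid, and the complementary event can be absorbed into the $o(1)$. Fix $\mathcal S$ and such an $H$, and group candidate subsets $S$ by the overlap $k := |S\cap\mathcal S|$. For a fixed $(S,\tau)$ the pairs inside $S$ split into $\binom{k}{2}$ ``old'' pairs (edges of $H$, hence deterministic) and $\binom{m}{2}-\binom{k}{2}$ ``fresh'' pairs (i.i.d.\ $\mathrm{Bern}(p)$, independent of $H$); the event $\tau(G[S]) = H$ forces $\tau|_{S\cap\mathcal S}$ to be an induced embedding of $H[S\cap\mathcal S]$ into $H$ and then prescribes each fresh pair, so its conditional probability equals $p^{t}(1-p)^{\binom{m}{2}-\binom{k}{2}-t}$ with $t = e_H - e_{H[S\cap\mathcal S]}$, which on the typical event is at most $\exp\!\bigl(-(1-o(1))(\binom{m}{2}-\binom{k}{2})\,h(p)\bigr)$.

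Summing over the overlap, for $k = 0$ there are $\binom{n-m}{m}\le n^m/m!$ choices of $S$ and at most $m!$ choices of $\tau$, so this contribution to the expectation is at most
\begin{align*}
    n^m\exp\!\Bigl(-(1-o(1))\tbinom{m}{2} h(p)\Bigr) = \exp\!\Bigl(-m\bigl((1-o(1))\tfrac{m-1}{2}h(p) - \log n\bigr)\Bigr)\longrightarrow 0,
\end{align*}
precisely under the hypothesis $\tfrac{m}{2} h(p) - \log n\to\infty$. For $1\le k\le m-1$ the fresh-pair count drops by $\binom{k}{2}$, but this is outweighed by two gains: only $\binom{m}{k}\binom{n-m}{m-k}$ subsets have overlap $k$, and the number of surviving bijections $\tau$ is governed by the number of induced embeddings of $H[S\cap\mathcal S]$ into $H$ (times the $(m-k)!$ extensions to $S\setminus\mathcal S$), which on the typical event admits a good upper bound through second-moment (subgraph-count) estimates, with the extreme regimes $k = O(1)$ and $m-k = O(1)$ handled by direct counting. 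Summing these bounds over $k$ then yields $o(1)$.

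The step I expect to be the main obstacle is the middle range of $k$: one must bound, uniformly over the exponentially many $k$-subsets $C\subseteq\mathcal S$, the number of induced embeddings of $H[C]$ back into $H$, so that the extra combinatorial freedom created by the overlap cannot conspire with the fresh edges to manufacture a spurious isomorphic copy of $H$. This is where the structural-entropy and large-deviations tools mentioned in the introduction do the real work; by contrast the $k = 0$ term is a one-line computation that already pins down the threshold $\tfrac{m}{2} h(p) - \log n\to\infty$.
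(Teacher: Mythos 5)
Your overall strategy (first moment plus typicality of $H$) is the paper's strategy, and your $k=0$ computation is exactly the bound that pins down the threshold there; but as written your proof has a genuine gap precisely at the step you flag yourself. For $1\le k\le m-1$ you never actually bound the number of induced embeddings of $H[C]$ into $H$ uniformly over the exponentially many overlap sets $C\subseteq\mathcal S$; you defer this to unspecified ``second-moment (subgraph-count) estimates'' and then assert that ``summing these bounds over $k$ then yields $o(1)$,'' which is unsupported. Moreover, the typical event you want to condition on is not available under \eqref{set achievability condition} alone: triviality of $\mathrm{Aut}(H)$ requires $mp-\log m\to\infty$ (Lemma \ref{lemma: wright}), and the paper's own example in Section \ref{sec: perm recovery achievability proof} (with $m=n/2$, $p=\log m/m$) satisfies \eqref{set achievability condition} while $mp-\log m=0$; likewise a uniform $(1+o(1))$ concentration of $e_{H[C]}$ over \emph{all} induced subgraphs fails for small $|C|$, and a union bound over $\binom{m}{k}$ sets $C$ needs $\binom{k}{2}p\gg k\log m$, which fails in exactly the sparse regimes allowed by \eqref{set achievability condition}. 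So both the typical event and the middle-$k$ estimate would require substantial additional work that the proposal does not supply.

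For comparison, the paper avoids the overlap decomposition entirely. It conditions only on edge-typicality of $H$ (Definition \ref{defn: typical graph}, with $\varepsilon=(m\sqrt{p})^{-1/2}$), uses the closed form of Lemma \ref{lemma: close form expectation} together with the crude bound $\mathrm{Aut}_H\ge 1$ to get $\mathbb{E}(X_H\mid H)\le n^m\exp\bigl(-(1-\varepsilon)\binom{m}{2}h(p)\bigr)$, applies Markov's inequality to $\Pr(X_H\ge 2)$, and kills the atypical event by a Chernoff bound on $e_H$ --- in effect, your $k=0$ bound is applied to the \emph{total} isomorphic-copy count, so no asymmetry, rigidity, or per-overlap embedding counts are needed. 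Your instinct that copies overlapping $\mathcal S$ are correlated with $H$ is a legitimate concern (the paper's conditional-expectation step treats this lightly), but your write-up does not resolve it either; if you want the overlap-refined route you must prove the uniform embedding-count bound, whereas the paper's coarser Markov bound sidesteps the issue and already yields the stated threshold.
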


\begin{theorem}[Exact Permutation Recovery Achievability] \label{thm: perm recovery achievability}
    Consider the Erd\"os-R\'enyi subgraph pair $(G, H_\pi) \sim \mathcal G(n,m,p)$. If 
    \begin{numcases}{}
        \frac{m}{2}h(p) - \log n \to \infty \label{permutation achievability condition 1} \\
        mp - \log m \to \infty, \label{permutation achievability condition 2}
    \end{numcases}
    then the MAP estimator $(\hat{\mathcal S}, \hat{\pi})$ from Algorithm \ref{alg: brute-force algo} achieves exact permutation recovery w.h.p.
\end{theorem}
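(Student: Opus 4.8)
The plan is to prove Theorem~\ref{thm: perm recovery achievability} by a first-moment (union bound) argument over all ``bad'' configurations, run on the output of Algorithm~\ref{alg: brute-force algo}. Conditioning on the true configuration $(\mathcal S, \pi)$, the estimator succeeds precisely when $(\mathcal S, \pi)$ is the unique element of $\mathcal O_{\mathrm{BF}}$; hence the error probability is at most $\sum_{(S,\sigma) \neq (\mathcal S, \pi)} \Pr\big(\sigma(G[S]) = H_\pi\big)$, and it suffices to show this sum vanishes. I would partition the alternative configurations $(S,\sigma)$ according to the overlap $k = |S \cap \mathcal S|$, which ranges over $k \in \{0, 1, \ldots, m\}$, with the case $k = m$ (so $S = \mathcal S$ but $\sigma \neq \pi$) being exactly the ``relabeling'' term that forces the extra condition~\eqref{permutation achievability condition 2}, while $k < m$ reuses the analysis behind Theorem~\ref{thm: set recovery achievability}.

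First I would handle the $k < m$ terms. For a fixed $(S,\sigma)$ with $|S \cap \mathcal S| = k < m$, the event $\sigma(G[S]) = H_\pi$ pins down $\binom{m}{2}$ edge slots of $G$ to prescribed values; the key point is that at least $\binom{m}{2} - \binom{k}{2}$ of those slots correspond to pairs not entirely inside $\mathcal S$, hence are independent of $H = G[\mathcal S]$ and each is matched with probability $p$ or $1-p$, contributing a factor that is at most $\exp\!\big(-(\binom{m}{2} - \binom{k}{2}) \, c\big)$ for an appropriate constant; a cleaner route is to observe $\Pr(\sigma(G[S]) = H_\pi \mid H_\pi) \le 2^{-(\binom m2 - \binom k2)}$-type bounds, or more precisely to carry the $h(p)$ exponent through as in the set-recovery proof. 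Counting configurations, there are at most $\binom{m}{k}\binom{n-m}{m-k} m!$ of them for each $k$, and $\binom{n-m}{m-k} m! \le n^{m-k} m!$. The resulting sum over $k = 0, \ldots, m-1$ is dominated, after the usual telescoping estimate, by the $k = m-1$ term, whose logarithm behaves like $-\tfrac{m}{2}h(p) + \log n + O(\log m)$ (the $\binom m2 - \binom{m-1}{2} = m-1$ newly constrained edges give roughly an $(m-1)h(p)$-scale exponent, but the binding constraint matches the set-recovery threshold); condition~\eqref{permutation achievability condition 1} makes this $\to -\infty$. This part is essentially a repackaging of Theorem~\ref{thm: set recovery achievability} and should go through with the same lemmas.

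The genuinely new part is $k = m$: configurations $(\mathcal S, \sigma)$ with $\sigma = \pi \circ \tau$ for a nontrivial permutation $\tau$ of $[m]$. Here $\sigma(G[\mathcal S]) = H_\pi$ is equivalent to $\tau$ being an automorphism of $H = G[\mathcal S]$, i.e. $\tau \in \mathrm{Aut}_H$. So the $k=m$ contribution to the error event is exactly $\{|\mathrm{Aut}_H| > 1\}$, and I need $\Pr(\mathrm{Aut}_{G[\mathcal S]} \ne \{\mathrm{id}\}) \to 0$; since $G[\mathcal S] \sim \mathrm{ER}(m,p)$, this is the classical statement that a random graph is asymptotically rigid (has trivial automorphism group). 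The standard proof: union-bound over the ``defect'' $\ell = $ number of non-fixed points of $\tau$; for a $\tau$ moving exactly $\ell$ vertices, the probability $\tau \in \mathrm{Aut}_H$ decays like $(\max(p,1-p))$ raised to a power of order $\ell m$ (comparing edges from a moved vertex to the $m - O(\ell)$ untouched vertices), and there are at most $m^{\ell}$ such permutations; the sum over $\ell \ge 2$ is controlled once $mp$ (and $m(1-p)$) grow fast enough relative to $\log m$ — which is precisely condition~\eqref{permutation achievability condition 2} (together with its complement-graph mirror, obtained by symmetry $p \leftrightarrow 1-p$, or absorbed since $h(p)$ already appears in~\eqref{permutation achievability condition 1}). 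I expect this rigidity estimate to be the main obstacle: getting the exponent sharp enough that $mp - \log m \to \infty$ alone suffices (rather than a stronger assumption) requires care in how one counts permutations of a given defect and how one chooses the independent set of edge-comparisons; the transposition case $\ell = 2$ is the tightest and dictates the threshold. Combining the $k < m$ bound under~\eqref{permutation achievability condition 1} with the rigidity bound under~\eqref{permutation achievability condition 2} yields that the total error probability vanishes, completing the proof.
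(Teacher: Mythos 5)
Your stratification by overlap breaks down precisely in the large-overlap strata $k<m$, and the claimed reduction to Theorem~\ref{thm: set recovery achievability} does not go through as you set it up. For a configuration $(S,\sigma)$ with $|S\cap\mathcal S|=k$, the only edge slots that are independent of $H$ are the $\binom{m}{2}-\binom{k}{2}$ pairs meeting $S\setminus\mathcal S$; bounding the internal constraints by $1$, the per-configuration probability is at best of order $\exp\bigl(-(\binom{m}{2}-\binom{k}{2})h(p)\bigr)$, while your count retains the full factor $m!$ from the union over bijections. At $k=m-1$ this gives a term of size at least $\exp\bigl(\log n+\Theta(m\log m)-(m-1)h(p)\bigr)$, which diverges because $h(p)\le\log 2\ll\log m$; more generally every stratum with $m-k\lesssim \log m/h(p)$ blows up, and condition~\eqref{permutation achievability condition 2} does not repair this since it only gives $mp\gg\log m$, not $h(p)\gg\log m$. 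So your statement that the $k=m-1$ term has logarithm $-\tfrac m2 h(p)+\log n+O(\log m)$ is a miscalculation: the $m!$ is $m\log m$ in the exponent, not $O(\log m)$, and it is not paid for by only $m-1$ independent edge slots. To rescue a stratified union bound you would have to exploit the internal consistency constraints, i.e.\ count partial isomorphisms $\pi^{-1}\sigma|_{S\cap\mathcal S}$ of $H$ — which is a rigidity estimate of the same difficulty as your $k=m$ case extended to overlapping copies, and is exactly the work your sketch does not do.

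The paper avoids this entirely by not unioning over bijections at all in the $k<m$ regime: the event $\exists\sigma:\sigma(G[S])=H_\pi$ is precisely $G[S]\cong H$, so set uniqueness under \eqref{permutation achievability condition 1} (Theorem~\ref{thm: set recovery achievability}) already eliminates every $S\ne\mathcal S$, and then, conditioned on $\hat{\mathcal S}=\mathcal S$, uniqueness of the bijection is exactly the statement that $H\sim\mathrm{ER}(m,p)$ has trivial automorphism group, which the paper obtains from Lemma~\ref{lemma: wright} (Wright's theorem) under \eqref{permutation achievability condition 2}. Your $k=m$ stratum is the same idea — you are re-deriving that rigidity result from scratch — and your defect-based sketch is the standard route, but as written it is incomplete (you yourself flag that making $mp-\log m\to\infty$ suffice at defect $\ell=2$ "requires care"; note the complement-side condition is moot since the paper reduces to $p\le 1/2$ via Lemmas~\ref{lemma: complement graph} and~\ref{lemma: p <= 1/2}). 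In short: the two-condition structure of your argument matches the paper's, but the $k<m$ part has a genuine quantitative gap, and the correct fix is to collapse the union over $\sigma$ into the isomorphism event (as the paper does) rather than to push the naive $(S,\sigma)$ union bound.
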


\mic{A natural question is whether both conditions (\ref{permutation achievability condition 1}) and (\ref{permutation achievability condition 2}) in Theorem \ref{thm: perm recovery achievability} are necessary. The corresponding discussion is postponed to Section \ref{sec: perm recovery achievability proof}.}

\begin{theorem}[Converse] \label{thm: converse}
    Consider the Erd\"os-R\'enyi subgraph pair $(G, H_\pi) \sim \mathcal G(n,m,p)$. If
    \begin{align}
        \quad & \frac{m}{2}h(p) - \log \frac{n}{m} \to -\infty, \label{new converse region 1}
    \end{align}
    then no algorithm achieves exact set recovery w.h.p.
\end{theorem}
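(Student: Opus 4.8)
The plan is to show that under condition \eqref{new converse region 1}, with probability bounded away from zero there exists a second vertex subset $\mathcal S' \neq \mathcal S$ with $G[\mathcal S'] \cong G[\mathcal S]$, so that no estimator — in particular the MAP estimator, which by Proposition \ref{prop: equivalence} is optimal — can distinguish $\mathcal S$ from $\mathcal S'$ given only $(G, H_\pi)$. Since $\mathcal S$ is chosen uniformly at random, the posterior must split its mass among all such isomorphic copies, and the probability of correctly identifying $\mathcal S$ is at most the reciprocal of the number of copies, averaged appropriately; this forces $\Pr(\mathcal S = \hat{\mathcal S})$ away from $1$. Concretely, I would condition on $G$ and $\mathcal S$, let $N = N(G,\mathcal S)$ be the number of size-$m$ subsets $\mathcal S'$ of $[n]$ with $G[\mathcal S'] \cong G[\mathcal S]$ (including $\mathcal S$ itself and any subset obtained by a non-trivial reshuffle), and observe that $\Pr(\hat{\mathcal S} = \mathcal S \mid G, H_\pi) \le 1/N$ for the Bayes-optimal estimator; hence it suffices to prove that $\EE[1/N]$ is bounded away from $1$, for which it is enough to show $N \ge 2$ with probability bounded away from zero.

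The core of the argument is therefore a second-moment (or first-moment lower-bound) estimate on the number of "spurious" isomorphic copies. Let $X$ count the number of subsets $\mathcal S' \subseteq [n]$ with $|\mathcal S'| = m$, $\mathcal S' \neq \mathcal S$, and $G[\mathcal S'] \cong H$. I would first compute $\EE[X]$: there are roughly $\binom{n}{m} \approx \binom{n-m}{m}\cdot(\ldots)$ candidate subsets disjoint from or overlapping $\mathcal S$, and for a fixed such subset the probability that its induced subgraph is isomorphic to a fixed graph on $m$ vertices with $e_H$ edges is, conditionally on $H$, approximately $\frac{m!}{\mathrm{Aut}_H}\, p^{e_H}(1-p)^{\binom m2 - e_H}$. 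Taking expectation over $H$ and using the concentration of $e_H$ near $\binom m2 p$ together with Stirling's formula, the typical value of this probability is $e^{-\binom m2 h(p) + O(m\log m)}$, so that $\EE[X] \asymp \binom{n}{m} \cdot e^{-\binom m2 h(p)}\cdot e^{O(m\log m)}$, and condition \eqref{new converse region 1} — which rearranges to $\binom{n}{m}e^{-\binom m2 h(p)} \to \infty$ after absorbing the $\log(n/m)$ versus $\log\binom nm$ discrepancy — makes $\EE[X] \to \infty$. The delicate point is that I should restrict attention to subsets $\mathcal S'$ that are \emph{vertex-disjoint} from $\mathcal S$ (or share $O(1)$ vertices), so that the events $\{G[\mathcal S'] \cong H\}$ across different $\mathcal S'$ are only weakly correlated; this keeps $\EE[X^2] \le (1+o(1))(\EE[X])^2$ and, via Paley–Zygmund, gives $\Pr(X \ge 1) = 1 - o(1)$, hence $N \ge 2$ with high probability and the converse follows.

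I expect the main obstacle to be the variance computation and, relatedly, the bookkeeping around automorphisms and the randomness of $H$. Because $H = G[\mathcal S]$ is itself random, the quantity $p^{e_H}(1-p)^{\binom m2 - e_H}$ fluctuates, and one must show that on the bulk event $\{|e_H - \binom m2 p| \le \binom m2 p \cdot o(1)\}$ (or a large-deviations-controlled event) the first and second moments behave as in the non-random-pattern case; off this event the contribution is negligible. A secondary subtlety is matching the exact constant in the exponent: the theorem uses $\frac m2 h(p)$ rather than $\binom m2 h(p)/m$, so I need $\binom m2 h(p) = \frac{m(m-1)}{2}h(p)$ and $\log\binom nm = m\log(n/m) + O(m)$, and the hypothesis $\frac m2 h(p) - \log\frac nm \to -\infty$ must be shown to dominate the $O(m)$ and $O(m\log m)$ lower-order terms — which requires a short argument that $h(p)$ cannot be too small relative to $\log n$, or else a separate easy sub-case (e.g.\ $p$ very close to $0$ or $1$, where $h(p) \to 0$ and one argues directly that the all-or-nearly-all-absent / all-present subgraph is abundant). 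Handling that degenerate regime cleanly, together with the weak-correlation variance bound, is where the real work lies; the rest is Stirling, Chernoff, and Paley–Zygmund.
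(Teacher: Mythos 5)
Your reduction step is sound and is worth acknowledging: given $(G,H_\pi)$ the posterior over $\mathcal S$ is indeed uniform on $\{S: G[S]\cong H_\pi\}$ (the automorphism factor $\mathrm{Aut}_{H_\pi}/m!$ is the same for every candidate), so any estimator succeeds with conditional probability $1/N$, and it suffices to show $N\ge 2$ with probability bounded away from zero. Note, however, that this is a genuinely different route from the paper: the paper never counts spurious copies at all. It recasts recovery of $\mathcal S$ as a lossless-compression problem and compares the structural-entropy bound $\mathbf H(G[\mathcal S]\mid G)\le\binom m2 h(p)$ of the transmitted unlabeled pattern with $\mathbf H(\mathcal S)=\log\binom nm\ge m\log\frac nm$, concluding infeasibility when $\frac m2 h(p)\ll\log\frac nm$; the authors state explicitly that they adopt this argument \emph{instead of} the second-moment method you propose.

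The genuine gap in your sketch is precisely the step you assert without proof: $\EE[X^2]\le(1+o(1))(\EE[X])^2$. Restricting the candidate sets $\mathcal S'$ to be vertex-disjoint from $\mathcal S$ does not make the events $\{G[\mathcal S']\cong H\}$ weakly correlated, because two candidate sets $\mathcal S',\mathcal S''$ can overlap \emph{each other}; conditional on $H$, these overlap terms dominate the variance unless one proves large-deviation control of $\Pr(G[\mathcal S']\cong H,\,G[\mathcal S'']\cong H)$ for every overlap size $k$, which for induced copies of a \emph{random} pattern amounts to controlling the densest (and, since non-edges must also match, the sparsest) $k$-vertex induced subgraphs of $H\sim\mathrm{ER}(m,p)$ uniformly in $k$ and uniformly over all $(m,p)$ regimes allowed by \eqref{new converse region 1} — including $\binom m2 p=O(1)$, where your typicality event is useless and your "easy degenerate sub-case" is only gestured at, and constant $p$, where overlap correlations are governed by exact agreement of both edges and non-edges. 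This is exactly the kind of balancedness/overlap condition that makes second-moment arguments for planted or contained subgraphs delicate (cf.\ the density assumptions in \cite{mos23,lee25}), and the paper's concluding remarks point to the same obstruction (automorphism/overlap bookkeeping for random $H$) as the reason the entropy argument is used. The first-moment side of your sketch is fine in spirit (the $O(m)$ and $\varepsilon\binom m2 h(p)$ corrections are beaten because the hypothesis gives $m\bigl(\log\frac nm-\frac m2 h(p)\bigr)\to\infty$, and $m!/\mathrm{Aut}_H\ge1$ is the harmless direction here), but without a worked second-moment bound covering the full regime of \eqref{new converse region 1}, the proposal does not yet prove the theorem.
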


\begin{remark}
    A converse for exact set recovery is immediately a converse for exact permutation recovery.
\end{remark}

\textcolor{black}{
We summarize the achievability-converse results for exact set recovery and exact permutation recovery, respectively, in the following theorems, and we justify the correctness in Appendix \ref{app: justification for ac pairs}.
\begin{theorem}[Achievability-Converse pair for Exact Set Recovery] \label{thm: ac pair for set recovery}
    Consider the Erd\"os-R\'enyi subgraph pair $(G, H_\pi) \sim \mathcal G(n,m,p)$. \\ 
    \noindent\underline{Achievability:} If 
    \begin{align} \label{tight threshold achievability}
        \frac{m}{2}h(p) - \log n \to \infty,
    \end{align}
    then the MAP algorithm achieves exact set recovery w.h.p. \\
    \noindent\underline{Converse:} If 
    \begin{align}
        \frac{m}{2}h(p) - \log \frac{n}{m} \to -\infty,
    \end{align}
    then no algorithm achieves exact set recovery w.h.p. \\
    \noindent Furthermore, if any of the following conditions if any of the following conditions are satisfied:
    \begin{align}
        \mathrm{(i)} \quad & \log m = o(\log n), \qquad \label{tight threshold case 1} \text{ or } \\
        \mathrm{(ii)} \quad & mp - \log m\to \infty, \label{tight threshold case 2}
    \end{align}
    then the converse condition can be improved to 
    \begin{align}
        \frac{m}{2}h(p) - \log n \to -\infty, \label{tight threshold converse}
    \end{align}
\end{theorem}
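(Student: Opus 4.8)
The achievability half is a verbatim restatement of Theorem~\ref{thm: set recovery achievability} and the first converse half ($\frac m2 h(p)-\log\frac nm\to-\infty$) is a verbatim restatement of Theorem~\ref{thm: converse}, so the only content is the ``Furthermore'' clause: under (i) or (ii), no algorithm achieves exact set recovery already when $\frac m2 h(p)-\log n\to-\infty$. Since $\log\frac nm=\log n-\log m\le\log n$, this regime strictly contains the one handled by Theorem~\ref{thm: converse} (e.g.\ $m=(\log n)^2$, $\frac m2 h(p)=\log n-\log\log n$), so a genuinely new converse is required. My plan is to show that, in this regime and under (i) or (ii), with probability tending to $1$ there is a vertex set $T\ne\mathcal S$ with $|T|=m$ and $G[T]\cong H$ (where $H=G[\mathcal S]$). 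This suffices: repeating the computation behind Proposition~\ref{prop: equivalence}, given $(G,H_\pi)$ the posterior of $(\mathcal S,\pi)$ is uniform over $\{(S,\sigma)\in\Theta:\sigma(G[S])=H_\pi\}$, and every candidate set carries the same number $|\mathrm{Aut}_{H_\pi}|$ of consistent labelings, so the posterior of $\mathcal S$ is uniform over the family $\{S:G[S]\cong H_\pi\}$; whenever this family has at least two members, every estimator errs with conditional probability at least $\tfrac12$, hence exact set recovery is not achievable w.h.p.

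For the existence of such a $T$ I would only use sets disjoint from $\mathcal S$. Condition on $\mathcal S$ and on $H=G[\mathcal S]$; the remaining edges of $G$ are still i.i.d.\ $\mathrm{Bern}(p)$, so it is enough that the induced-copy search succeeds inside the $\mathrm{ER}(n-m,p)$ graph on $[n]\setminus\mathcal S$. Let $N$ count the size-$m$ subsets $T\subseteq[n]\setminus\mathcal S$ with $G[T]\cong H$, so
\begin{align*}
    \EE[\,N\mid H\,]=\binom{n-m}{m}\,\frac{m!}{|\mathrm{Aut}_H|}\;p^{e_H}(1-p)^{\binom m2-e_H}.
\end{align*}
On the high-probability event that $e_H$ is within a constant number of standard deviations of $\binom m2 p$ and that $\mathrm{Aut}_H$ is trivial, the right-hand side is $\asymp\binom{n-m}{m}\,m!\,e^{-\binom m2 h(p)}$. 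Since (i) forces $m=n^{o(1)}=o(\sqrt n)$ (and, in the cases not already covered below, so does (ii)), we have $\binom{n-m}{m}m!=(n-m)(n-m-1)\cdots(n-2m+1)=\Theta(n^m)$, whence
\begin{align*}
    \EE[\,N\mid H\,]\;\gtrsim\;\exp\!\Big(m\log n-\tbinom m2 h(p)\Big)\;=\;\exp\!\Big(m\big(\log n-\tfrac{m-1}{2}h(p)\big)\Big)\;\longrightarrow\;\infty ,
\end{align*}
because $\log n-\tfrac{m-1}{2}h(p)\ge\log n-\tfrac m2 h(p)\to\infty$. This is exactly where the theorem's ``$\log n$'' (rather than Theorem~\ref{thm: converse}'s ``$\log\frac nm$'') comes from: the factor $m!$, present only because a disjoint set can be relabeled in $m!$ ways, contributes $\log m!\asymp m\log m$ to the exponent and upgrades $\log\frac nm=\log n-\log m$ to $\log n$.

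It remains to (a) run the second moment, bounding $\EE[N^2\mid H]$ by the usual decomposition over pairs $(T,T')$ with $|T\cap T'|=j$: the $j=0$ term gives $(1+o(1))\EE[N\mid H]^2$, and for $j\ge1$ one bounds the contribution by the number of induced copies of the $j$-vertex subconfigurations of the random graph $H$ inside $H$, which for a typical $H$ is small enough that $\EE[N^2\mid H]=(1+o(1))\EE[N\mid H]^2$, so $\Pr[N\ge1\mid H]\to1$ by Chebyshev; and (b) handle separately the regimes where $\mathrm{Aut}_H$ is not trivial: when $p$ is so small (or, by complementing $G$, so close to $1$) that $H$ is w.h.p.\ near-empty, almost every induced $m$-subset of $G$ is also near-empty, so confusable sets are overwhelmingly abundant and $\Pr[N\ge1]\to1$ trivially. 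One then checks that under (i) or (ii) and $\frac m2 h(p)-\log n\to-\infty$ every $(m,p)$ lands in one of these regimes: condition (ii), $mp-\log m\to\infty$, enters as (essentially) the asymmetry threshold of $\mathrm{ER}(m,p)$ guaranteeing trivial $\mathrm{Aut}_H$ w.h.p., while in the intermediate ``many isolated vertices yet many edges'' regime the converse condition itself forces $\frac m2 h(p)=o(\log n)$, so the first moment wins with ample room even with $|\mathrm{Aut}_H|$ as large as $m!$.

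The main obstacle is step (a) together with this regime bookkeeping: one must rule out that a small dense subgraph of the random graph $H$ produces a heavy-tailed correction to $\EE[N^2\mid H]$ (a density-type estimate on $H$), and one must match the clean hypothesis $mp-\log m\to\infty$ to the precise range in which $\mathrm{ER}(m,p)$ is asymmetric, so that the rigid and near-trivial regimes jointly cover the whole region $\frac m2 h(p)-\log n\to-\infty$ with no gap. By comparison the first-moment computation and the posterior-symmetry reduction to impossibility are routine.
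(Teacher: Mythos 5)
Your reduction is fine as far as it goes: the achievability and baseline converse are indeed just Theorems~\ref{thm: set recovery achievability} and~\ref{thm: converse}, and the posterior-symmetry observation (every candidate set $S$ with $G[S]\cong H_\pi$ carries exactly $|\mathrm{Aut}_{H_\pi}|$ consistent labelings, so two isomorphic copies force conditional error probability at least $1/2$) is correct. The genuine gap is that the heart of your converse for the ``furthermore'' clause --- the existence, w.h.p., of a second disjoint copy --- is never established. The first moment alone does not give it; you need your step (a), and for induced copies of a \emph{random} pattern $H\sim\mathrm{ER}(m,p)$ the overlapping terms in $\mathbb{E}(N^2\mid H)$ are governed by the densest subconfigurations of $H$, so one needs uniform large-deviation control of $e_{H[A]}$ over all $A\subseteq [m]$ and all overlap sizes; you explicitly defer this as ``the main obstacle.'' The same is true of the regime bookkeeping in step (b): under (i) alone there are admissible sequences with $mp\to\infty$ but $mp-\log m\to-\infty$ (e.g.\ $\log m=(\log n)^{3/4}$, $\frac m2 h(p)=\log n-\log\log n$), where $H$ is far from empty yet $\mathrm{Aut}_H$ is nontrivial; there your fallback ``the first moment wins even with $|\mathrm{Aut}_H|=m!$'' would require $\log n-\log m-\frac m2h(p)\to\infty$, i.e.\ exactly the unimproved converse condition, and your claim that the converse condition forces $\frac m2h(p)=o(\log n)$ in this regime is false. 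Closing that case needs an actual quantitative bound on $\log\mathrm{Aut}_H$ (e.g.\ via the isolated-vertex count), which you do not supply. As written, this is a plausible program rather than a proof.

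For comparison, the paper proves the improvement with no counting at all: it re-runs the lossless-compression (structural-entropy) converse of Theorem~\ref{thm: converse}. Under condition \eqref{tight threshold case 2} --- which is precisely the asymmetry threshold you identify --- it replaces the crude bound $\mathbf{H}(G[\mathcal S])\le\binom m2 h(p)$ by the refined structural entropy of Theorem~\ref{thm structural entropy of ER}, $\binom m2 h(p)-\log m!+o(1)$, so the extra $m\log m$ exactly absorbs the difference between $m\log\frac nm$ and $m\log n$ and the comparison $\mathbf{H}(G[\mathcal S]\mid G)\ll \mathbf{H}(\mathcal S)$ already closes under \eqref{tight threshold converse}; under condition \eqref{tight threshold case 1} it simply uses $\log\frac nm=(1-o(1))\log n$. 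So your instinct that (ii) is ``really about'' trivial automorphisms matches the paper, but the paper cashes it in through the entropy of the unlabeled subgraph rather than through a second-moment existence argument; if you wish to pursue your route, the second-moment estimate and the automorphism bounds above are the pieces you must actually prove.
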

\begin{theorem}[Achievability-Converse pair for Exact Permutation Recovery] \label{thm: ac pair for exact permutation recovery}
    Consider the Erd\"os-R\'enyi subgraph pair $(G, H_\pi) \sim \mathcal G(n,m,p)$. \\ 
    \noindent\underline{Achievability:} If 
    \begin{align}  \label{ac pair achievability permutation recovery}
        \frac{m}{2}h(p) - \log n \to \infty \qquad \text{ and } \qquad mp - \log m \to \infty,
    \end{align}
    then the MAP algorithm achieves exact permutation recovery w.h.p. \\
    \noindent\underline{Converse:} If 
    \begin{align} \label{ac pair converse permutation recovery}
        \frac{m}{2}h(p) - \log \frac{n}{m} \to -\infty \qquad \text{ or }  \qquad mp - \log m \to -\infty,
    \end{align}
    then no algorithm achieves exact permutation recovery w.h.p. \\
    \noindent Furthermore, if $\log m = o(\log n)$, then the converse condition can be improved to 
    \begin{align*}
        \frac{m}{2}h(p) - \log n \to -\infty.
    \end{align*}
\end{theorem}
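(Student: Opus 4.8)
The plan is to assemble Theorem \ref{thm: ac pair for exact permutation recovery} from the three achievability/converse theorems already proved, so that the only real work is the ``furthermore'' sharpening of the converse. The achievability half is immediate: the hypothesis $\frac{m}{2}h(p) - \log n \to \infty$ and $mp - \log m \to \infty$ is exactly conditions (\ref{permutation achievability condition 1})--(\ref{permutation achievability condition 2}) of Theorem \ref{thm: perm recovery achievability}, so that theorem applies verbatim. For the converse half, I would argue that failure follows from either of two separate obstructions. If $\frac{m}{2}h(p) - \log\frac{n}{m} \to -\infty$, then by the Remark following Theorem \ref{thm: converse} (a converse for exact set recovery is a converse for exact permutation recovery) together with Theorem \ref{thm: converse} itself, no algorithm can even recover $\mathcal S$, hence none can recover $(\mathcal S, \pi)$. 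The second disjunct, $mp - \log m \to -\infty$, is the genuinely new piece: here I expect the point to be that even granting an oracle that reveals $\mathcal S$ exactly, the induced subgraph $H = G[\mathcal S]$ is too sparse to pin down $\pi$, because in this regime $H_\pi$ has with high probability a nontrivial automorphism (e.g.\ two isolated vertices, or more generally the isolated-vertex count is $\omega(1)$), so $\pi$ is information-theoretically unidentifiable from $(G, H_\pi)$ — any bijection differing from $\pi$ by such an automorphism is equally consistent. This sub-argument is a standard second-moment / Poisson-approximation computation on the number of isolated vertices of an $\mathrm{ER}(m,p)$ graph, which should already appear in the analysis behind condition (\ref{permutation achievability condition 2}) in Section \ref{sec: perm recovery achievability proof}, so I would cite or mirror it rather than redo it.

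For the ``furthermore'' clause, assume additionally $\log m = o(\log n)$ and suppose $\frac{m}{2}h(p) - \log n \to -\infty$; I must show no algorithm achieves exact permutation recovery. The cleanest route is to reduce to the set-recovery converse in the sharpened form already available in Theorem \ref{thm: ac pair for set recovery}(i): under $\log m = o(\log n)$, Theorem \ref{thm: ac pair for set recovery} tells us that $\frac{m}{2}h(p) - \log n \to -\infty$ already precludes exact set recovery, and again a set-recovery converse is a permutation-recovery converse by the Remark. So the permutation converse under $\log m = o(\log n)$ collapses to exactly the set-recovery statement, and I simply need to check the elementary fact that when $\log m = o(\log n)$ the three expressions $\log\frac{n}{m}$, $\log n - \log m$, and $\log n$ are all $\log n (1 + o(1))$, so the conditions $\frac{m}{2}h(p) - \log\frac{n}{m} \to -\infty$ and $\frac{m}{2}h(p) - \log n \to -\infty$ coincide in this regime (using also $\frac{m}{2}h(p) \le \frac{m}{2}\log 2 = O(m)$ if needed, though more care is warranted since $\frac m2 h(p)$ could a priori be as large as $\Theta(m\log(1/p))$ only when $p\to 0$, in which case one checks the difference $\log\frac nm - \log n = -\log m = o(\log n)$ is negligible against the divergence). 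Most of this is the content of Appendix \ref{app: justification for ac pairs}, which the text says justifies correctness, so my ``proof'' here is mainly an organized pointer plus the $o(\log n)$ bookkeeping.

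The only obstacle I anticipate is the second disjunct of the converse, $mp - \log m \to -\infty$, which does not reduce to any set-recovery statement and genuinely needs the unidentifiability-of-$\pi$ argument. The subtlety is that ``$H_\pi$ has a nontrivial automorphism w.h.p.'' must be turned into a real lower bound on the error probability of \emph{any} estimator $\hat\pi$: the standard way is a symmetrization argument — condition on $(G, H_\pi)$; the posterior over $\pi$ is uniform over the coset $\{\,\rho : \rho(G[\mathcal S]) = H_\pi\,\}$ (this uses Proposition \ref{prop: equivalence} and the uniformity of $\pi$), and whenever $H_\pi$ has $k \ge 2$ automorphisms this coset has size $\ge k$, so even the MAP/optimal estimator errs with probability $\ge 1 - 1/k$; taking $k \to \infty$ (which $mp - \log m \to -\infty$ guarantees, since the expected number of isolated vertices in $\mathrm{ER}(m,p)$ is $m(1-p)^{m-1} = m e^{-mp(1+o(1))} \to \infty$, and a second-moment bound gives concentration, yielding at least $(\text{that many})!$ automorphisms) forces $\Pr(\pi = \hat\pi) \to 0$. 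I would also need to handle the boundary between this regime and the complementary one gracefully, but since we only need \emph{each} disjunct to imply failure, there is no interaction to worry about. I would then conclude by combining: achievability from Theorem \ref{thm: perm recovery achievability}, converse from (Theorem \ref{thm: converse} $+$ Remark) for the first disjunct and the symmetrization argument for the second, and the ``furthermore'' from Theorem \ref{thm: ac pair for set recovery}(i) $+$ Remark $+$ the $o(\log n)$ estimate.
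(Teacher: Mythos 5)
Your proposal is correct and follows essentially the same route as the paper's own justification: achievability is imported verbatim from Theorem~\ref{thm: perm recovery achievability}; the first converse disjunct comes from Theorem~\ref{thm: converse} together with the bound $\Pr(\mathcal S = \hat{\mathcal S}, \pi = \hat{\pi}) \leq \min\{\Pr(\mathcal S = \hat{\mathcal S}), \Pr(\pi = \hat{\pi})\}$; the second disjunct comes from non-trivial automorphisms of $H \sim \mathrm{ER}(m,p)$ (the paper simply invokes Lemma~\ref{lemma: wright}, whereas you re-derive the automorphism abundance via isolated vertices and make the posterior-over-the-coset symmetrization explicit, which the paper leaves implicit); and the ``furthermore'' clause is obtained, as in the paper, by reducing to the improved set-recovery converse of Theorem~\ref{thm: ac pair for set recovery} under $\log m = o(\log n)$ and the Remark that a set-recovery converse is a permutation-recovery converse. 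The only differences are added detail rather than a different method, so your argument matches the paper's proof.
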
 
}

\begin{figure}
    \centering
    \includegraphics[width=0.35\linewidth]{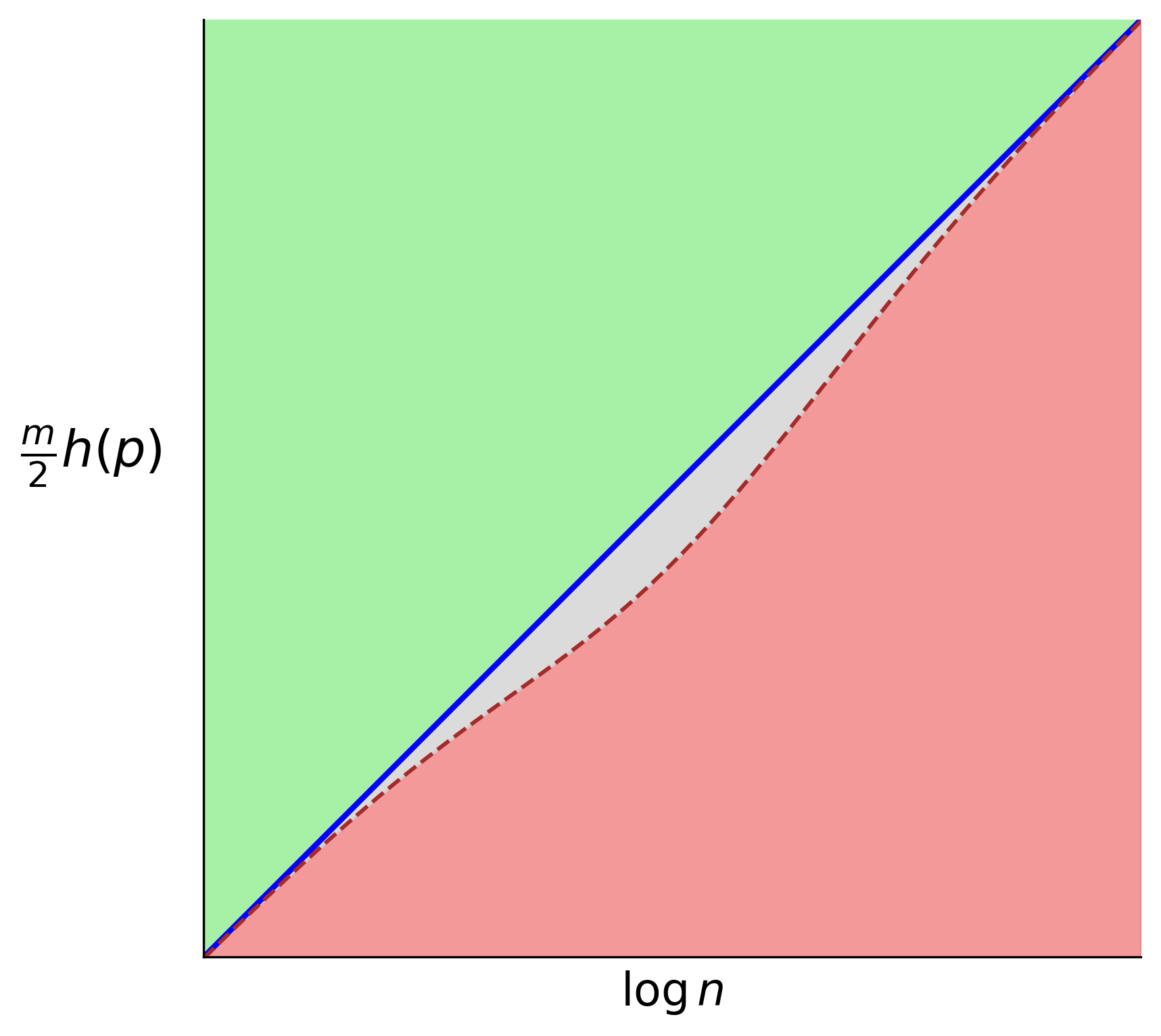}
    \caption{The green region represents the information‑theoretically achievable regime, while the red region corresponds to the non‑achievable regime. The grey region indicates cases whose achievability remains unknown. It is worth noting that the achievability and converse conditions coincide to form a tight threshold under certain conditions.}
    \label{fig: region}
\end{figure}

Before ending this section, we state two useful lemmas, which allow us to focus on a smaller class of graph sequences.

\begin{lemma}\label{lemma: complement graph}
    Let $G \sim \mathrm{ER}(n,p)$, and let $H, H_\pi$ be the subgraphs of $G$ obtained from Section \ref{sec: subgraph model}. Let $\tilde{H}$ be the subgraph of $G^c$ induced by the same $\mathcal S$, then $H^c_\pi \cong \tilde{H}$, where $H^c_\pi$ is the complement graph of $H_\pi$ on $[m]$.
\end{lemma}

\begin{proof}
    See Appendix \ref{app: proof of complement graph lemma}.
\end{proof}

\begin{lemma}\label{lemma: p <= 1/2}
    Let $G \sim \mathrm{ER}(n,p)$. Then its complement graph $G^c \sim \mathrm{ER}(n,1-p)$.
\end{lemma}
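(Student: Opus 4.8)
The statement to prove is Lemma \ref{lemma: p <= 1/2}: if $G \sim \mathrm{ER}(n,p)$, then $G^c \sim \mathrm{ER}(n,1-p)$.

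\medskip

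\noindent\textbf{Proof proposal.}

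The plan is to argue directly from the definition of the Erd\H{o}s--R\'enyi measure. Recall that $G \sim \mathrm{ER}(n,p)$ means the vertex set is $[n]$ and, for each of the $\binom{n}{2}$ unordered pairs $\{u,v\}$, the indicator $\mathbf{1}\{\{u,v\} \in E_G\}$ is an independent $\mathrm{Bern}(p)$ random variable. The complement graph $G^c$ has the same vertex set $[n]$, and by definition $\{u,v\} \in E_{G^c}$ if and only if $\{u,v\} \notin E_G$. Hence $\mathbf{1}\{\{u,v\} \in E_{G^c}\} = 1 - \mathbf{1}\{\{u,v\} \in E_G\}$ for every pair $\{u,v\}$.

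First I would observe that if $X \sim \mathrm{Bern}(p)$ then $1 - X \sim \mathrm{Bern}(1-p)$, which is immediate since $\Pr(1 - X = 1) = \Pr(X = 0) = 1-p$. Applying this pairwise, each edge indicator of $G^c$ is $\mathrm{Bern}(1-p)$. Second, I would note that the map $(x_{\{u,v\}})_{\{u,v\}} \mapsto (1 - x_{\{u,v\}})_{\{u,v\}}$ applied coordinatewise to a family of independent random variables yields a family of independent random variables (a deterministic coordinatewise transformation of independent variables preserves independence). Therefore the collection $\big(\mathbf{1}\{\{u,v\} \in E_{G^c}\}\big)_{\{u,v\}}$ is a family of i.i.d.\ $\mathrm{Bern}(1-p)$ random variables indexed by the $\binom{n}{2}$ pairs, which is exactly the statement that $G^c \sim \mathrm{ER}(n,1-p)$.

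There is essentially no obstacle here; this is a one-line consequence of the definition, and the only thing to be careful about is phrasing the independence-preservation step cleanly. The lemma is stated mainly so that, combined with Lemma \ref{lemma: complement graph}, one may assume $p \le 1/2$ without loss of generality in the subsequent analysis.
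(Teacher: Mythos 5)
Your proposal is correct and follows essentially the same argument as the paper: both identify the edge indicators of $G$ as i.i.d.\ $\mathrm{Bern}(p)$ variables, note that the complement's indicators are $1-X_{\{u,v\}}$, and conclude these are i.i.d.\ $\mathrm{Bern}(1-p)$, i.e., $G^c \sim \mathrm{ER}(n,1-p)$. Your extra remark that a deterministic coordinatewise map preserves independence is a minor, harmless elaboration of the same step.
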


\begin{proof}
    See Appendix \ref{app: proof of p <= 1/2}.
\end{proof}

Hence, throughout the remainder of the paper, we further assume that $p(n) \in [0, 1/2]$ for all $n \in \mathbb{N}$ by combining the arguments in Lemma \ref{lemma: complement graph} and Lemma \ref{lemma: p <= 1/2}.

\section{Proof of Main Theorems} \label{sec: proof}

In this section, we prove our main results from Section \ref{sec: main results}. Specifically, we prove the two achievability theorems, including Theorem \ref{thm: set recovery achievability} and Theorem \ref{thm: perm recovery achievability} in Section \ref{sec: set recovery achievability proof} and Section \ref{sec: perm recovery achievability proof} respectively. We then establish the converse, Theorem \ref{thm: converse} in Section \ref{sec: converse proof}. \mic{In Appendix \ref{app: comparison}, we compare the results presented in Section \ref{sec: main results} with those presented in \cite{shiu25}.} Additionally, we justify the cases when our achievability and converse results form a tight threshold for exact set recovery in Section \ref{sec: threshold proof}.

\subsection{Proof of Achievability Theorem for Set Recovery} \label{sec: set recovery achievability proof}

In this section, we prove Theorem \ref{thm: set recovery achievability}. The approach for proving achievability for set recovery relies on the \mic{first-moment} method. A key observation for set recovery is that under Algorithm \ref{alg: brute-force algo}, exact set recovery is possible if and only if $H$ is the unique induced copy in $G$ with high probability. Therefore, we first define a random variable $X_H$ that counts the number of induced subgraphs in $G$ that are isomorphic to $H$ for a given $H$, then calculate the expectation of $X_H$ in Lemma \ref{lemma: close form expectation}. 

\begin{definition}\label{defn:XH}
    Let $H$ be an induced subgraph of $G$, and let $N = \binom{n}{v_H}$. Define the random variable $X_H$ as the number of subsets of $[n]$ that induce a subgraph isomorphic to $H$. In particular, if $\{\mathcal S_1, \ldots, \mathcal S_N\}$ is an enumeration of all subsets of vertices of $[n]$ with size $v_H$, then 
    \begin{align}
        X_H \triangleq \sum_{j=1}^{N} I_j,\label{X_H sum} 
    \end{align}
    where $I_j$ is the binary random variable that indicates whether the induced subgraph of $G$ on $\mathcal S_j$ is isomorphic to $H$, i.e., $G[\mathcal S_j] \cong H$.
\end{definition}

\begin{lemma}\label{lemma: close form expectation}
    Let $H$ be an induced subgraph of $G$, and let $X_H$ be defined in Definition \ref{defn:XH}, then
    \begin{align}
        \mathbb{E}(X_H \mid H) = \binom{n}{v_H} \frac{v_H!}{\mathrm{Aut}_H}p^{e_H}(1-p)^{\binom{v_H}{2} - e_H}. \label{expectation}
    \end{align}
\end{lemma}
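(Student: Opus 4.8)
The plan is to compute $\mathbb{E}(X_H \mid H)$ directly from the linearity of expectation applied to the decomposition $X_H = \sum_{j=1}^N I_j$ in \eqref{X_H sum}. Conditioned on $H$, each indicator $I_j = \mathbf{1}\{G[\mathcal S_j] \cong H\}$, and by symmetry of the Erd\"os-R\'enyi measure all $N = \binom{n}{v_H}$ terms have the same expectation, so it suffices to compute $\Pr(G[\mathcal S_1] \cong H \mid H)$ for a single fixed subset $\mathcal S_1$ of size $v_H$ and multiply by $N$.

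For the single-subset probability, I would count the number of labeled graphs on the vertex set $\mathcal S_1$ that are isomorphic to $H$: this number is exactly $v_H! / \mathrm{Aut}_H$ by the orbit-stabilizer theorem, since the set of graphs on $\mathcal S_1$ isomorphic to $H$ is the orbit of $H$ under the action of the symmetric group on $\mathcal S_1$, and the stabilizer of $H$ under this action is precisely $\mathrm{Aut}_H$. Each such labeled graph has exactly $e_H$ edges and $\binom{v_H}{2} - e_H$ non-edges, so under $\mathrm{ER}(n,p)$ it occurs with probability $p^{e_H}(1-p)^{\binom{v_H}{2} - e_H}$ (the edge indicators on the pairs within $\mathcal S_1$ being i.i.d.\ $\mathrm{Bern}(p)$). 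Multiplying, $\Pr(G[\mathcal S_1] \cong H \mid H) = \frac{v_H!}{\mathrm{Aut}_H} p^{e_H}(1-p)^{\binom{v_H}{2}-e_H}$, and then $\mathbb{E}(X_H \mid H) = N \cdot \Pr(G[\mathcal S_1] \cong H \mid H) = \binom{n}{v_H}\frac{v_H!}{\mathrm{Aut}_H}p^{e_H}(1-p)^{\binom{v_H}{2}-e_H}$, which is \eqref{expectation}.

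The only subtlety worth spelling out carefully — and the step I would treat as the ``main obstacle,'' though it is more a matter of bookkeeping than of genuine difficulty — is the independence issue: the events $\{G[\mathcal S_j] \cong H\}$ are generally \emph{not} independent across $j$ (overlapping subsets share edge variables), but linearity of expectation does not require independence, so the computation of the \emph{first} moment goes through unchanged. One should also note that conditioning on $H$ fixes $v_H$, $e_H$, and $\mathrm{Aut}_H$, but the randomness of $G$ on pairs not entirely inside a given $\mathcal S_j$, and on pairs inside $\mathcal S_j$ when $\mathcal S_j \neq \mathcal S$, is still governed by the i.i.d.\ $\mathrm{Bern}(p)$ law — this is where one uses that the conditioning on $H$ (equivalently, on $G[\mathcal S]$ and the anonymizing map, for the specific planted copy) leaves the remaining edge variables of $G$ intact and independent. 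A clean way to organize this is to observe that $H$ as an (abstract, unlabeled) graph is determined by the realization, and for \emph{any} realization of $H$ the conditional law of $G$ restricted to a fixed $\mathcal S_j$ is exchangeable, giving the claimed per-term probability uniformly; summing over the $N$ terms then yields the result.
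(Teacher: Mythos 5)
Your proof is correct and follows essentially the same route as the paper: linearity of expectation over the $\binom{n}{v_H}$ subsets, counting the labeled graphs on a fixed $m$-set isomorphic to $H$ as $v_H!/\mathrm{Aut}_H$ (your orbit--stabilizer argument is exactly the paper's permute-and-divide-by-automorphisms count), and multiplying by the per-realization probability $p^{e_H}(1-p)^{\binom{v_H}{2}-e_H}$. Your closing remarks on non-independence of the indicators and on the conditioning on $H$ are consistent with how the paper treats the conditional expectation, so no discrepancy arises.
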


The proof of Lemma \ref{lemma: close form expectation} is provided in Appendix \ref{app: proof of close form}.

\begin{definition} \label{defn: typical graph}
    Let $n \in \mathbb{N}$, define the set of typical graphs
    \begin{align*}
        \mathcal T_{\varepsilon}^n = \left\{\text{graph } G \text{ on } n \text{ vertices }: \left| e_G - \binom{n}{2}p \right| \leq \varepsilon \binom{n}{2}p\right\},
    \end{align*}
    where $\varepsilon > 0$. In other words, a typical graph $G \in \mathcal T_{\varepsilon}^n$ satisfies
    \begin{align} \label{edge approx for typical graph}
        (1 - \varepsilon) \binom{n}{2}p \leq e_G \leq (1 + \varepsilon) \binom{n}{2}p.
    \end{align}
\end{definition}

We may now prove Theorem \ref{thm: set recovery achievability}. The proof is done by showing that $\Pr(X_H \geq 2) \to 0$ as $n \to \infty$ using the \mic{first-moment} method. By law of total probability, we can split $\Pr(X_H \geq 2)$ in the following way:

\begin{align}
    \Pr(X_H \geq 2) & = \Pr(H \in \mathcal T^m_{\varepsilon}) \Pr(X_H \geq 2 \mid H \in \mathcal T^m_\varepsilon) + \Pr(H \notin \mathcal T^m_{\varepsilon}) \Pr(X_H \geq 2 \mid H \notin \mathcal T^m_\varepsilon) \notag \\
    & \leq \Pr(X_H \geq 2 \mid H \in \mathcal T^m_\varepsilon) + \Pr(H \notin \mathcal T^m_\varepsilon) \notag \\
    & \leq \sup_{H \in \mathcal T^m_\varepsilon} \Pr(X_H \geq 2 \mid H) + \Pr(H \notin \mathcal T^m_\varepsilon) \label{upper bound average by supremum}\\
    & \leq \frac{1}{2} \underbrace{\sup_{H\in \mathcal T^m_\varepsilon} \mathbb{E}(X_H \mid H)}_{\mathrm{(A1)}} + \underbrace{\Pr(H \notin \mathcal T^m_\varepsilon)}_{\mathrm{(A2)}}, \label{markov upper bound}
\end{align}
where in (\ref{upper bound average by supremum}) we upper bound the average of $\Pr(X_H \geq 2 \mid H = H')$ over $H' \in \mathcal T^m_\varepsilon$ with the supremum , and in (\ref{markov upper bound}) we applied Markov inequality. Now, we would like to show that both (A1) and (A2) vanish as $n \to \infty$ by suitably choosing a $\varepsilon = \varepsilon(n) > 0$.

First, we upper bound the combinatorial coefficients in (\ref{expectation}) by 
\begin{align}
    \binom{n}{v_H} \frac{v_H!}{\mathrm{Aut}_H} = \frac{n!}{m!(n-m)!} \cdot \frac{m!}{\mathrm{Aut}_H} = \frac{n!}{(n-m)!} \leq n^m, \label{simplified combinatorial coefficient}
\end{align}
where we used the fact that $\mathrm{Aut}_H \geq 1$ for any graph $H$. Therefore, for any $H \in \mathcal T^m_{\varepsilon}$, using (\ref{expectation}) and (\ref{edge approx for typical graph}) yield
\begin{align}
    \mathbb{E}(X_H \mid H) & \leq n^m p^{e_H} (1-p)^{\binom{m}{2} - e_H} \notag \\
    & = \exp\left(m \log n + e_H \log p + \left(\binom{m}{2} - e_H \right) \log (1-p)\right) \notag  \\
    & \leq \exp\left( m \log n + (1 - \varepsilon) \binom{m}{2}p \log p + \left(\binom{m}{2} - ( 1 + \varepsilon)\binom{m}{2}p\right)\log(1-p)\right) \notag \\
    & = \exp\left(m \log n - \binom{m}{2} h(p) + \varepsilon \binom{m}{2} \left( p\log \frac{1}{p} + p \log \frac{1}{1-p}\right)\right) \notag  \\
    & \leq \exp\left(m \log n - \binom{m}{2} h(p) + \varepsilon \binom{m}{2} h(p)\right) \label{h(p) inequality}\\
    & = \exp\left(m \log n - (1 - \varepsilon)\binom{m}{2} h(p)\right) \label{typical term},
\end{align}
where in (\ref{h(p) inequality}) we used the inequality
\begin{align*}
    p \log \frac{1}{p} + p \log \frac{1}{1-p} \leq p \log \frac{1}{p} + (1-p) \log \frac{1}{1-p} = h(p)
\end{align*}
which holds for all $p \in [0, 1/2]$. Note that (\ref{typical term}) provided a universal bound on $\mathbb{E}(X_H \mid H)$ over all $H \in \mathcal T^m_\varepsilon$, so we also have
\begin{align}
    \sup_{H \in \mathcal T_\varepsilon^m} \mathbb{E}(X_H \mid H) \leq \exp\left(m \log n - (1 - \varepsilon) \binom{m}{2} h(p)\right). \label{uniform bound on typical term}
\end{align}

On the other hand, note that the edge random variable $e_H$ can be regarded as a sum of $\binom{m}{2}$ i.i.d. $\mathrm{Bern}(p)$, so $e_H \sim \mathrm{Binom}\left(\binom{m}{2}, p\right)$ with $\mathbb{E}(e_H) = \binom{m}{2}p$. Hence, we may apply the multiplicative Chernoff bound on (A2) to obtain
\begin{align}
    \Pr(H \notin \mathcal T^m_\varepsilon) & = \Pr\left( \left| e_H - \binom{m}{2} p \right| > \varepsilon \binom{m}{2} p \right) \leq 2 \exp\left(-\frac{\varepsilon^2\binom{m}{2}p}{3}\right). \label{atypical term}
\end{align}
It remains to choose an appropriate $\varepsilon > 0$ such that (\ref{uniform bound on typical term}) and (\ref{atypical term}) vanish simultaneously as $n \to \infty$. 

Pick $\varepsilon = \frac{1}{\sqrt{m\sqrt{p}}} > 0$. We claim that $m\sqrt{p} \to \infty$ under the condition (\ref{set achievability condition}). We split into two cases when (i) $p \neq o(1)$; (ii) $p = o(1)$.

\begin{itemize}
    \item (Case i) Since $p \neq o(1)$, we can find some universal constant $\eta > 0$ such that $p(n) \geq \eta$ for sufficiently large $n$. Therefore,
    \begin{align*}
        m\sqrt{p} \geq m\sqrt{\eta} \to \infty
    \end{align*}
    as $m \to \infty$ is implicit from the condition (\ref{set achievability condition}).
    
    \item (Case ii) Note that there is some $p^{**} > 0$ such that $\sqrt{p} \geq h(p)$ for all $p \in [0, p^{**}]$. Therefore,
    \begin{align*}
        m\sqrt{p} \geq m h(p) \to \infty
    \end{align*}
    as $mh(p) \to \infty$ is again implicit from the condition (\ref{set achievability condition}).
\end{itemize}

Therefore, $\varepsilon = (m\sqrt{p})^{-1/2} = o(1)$. Combining with condition (\ref{set achievability condition}), (\ref{uniform bound on typical term}) vanishes as
\begin{align}
    \exp\left(m \log n - (1- \varepsilon) \binom{m}{2}p\right) \asymp \exp\left( m \left( \log n - (1 - o(1)) \frac{m}{2}h(p)\right)\right) \to 0. \label{typical term vanish}
\end{align}

At the same time, (\ref{atypical term}) also vanishes as
\begin{align}
    2 \exp\left(-\frac{\varepsilon^2\binom{m}{2}p}{3}\right) & \asymp 2 \exp\left(-\frac{\frac{1}{m\sqrt{p}} \cdot\frac{m^2p}{2}}{3}\right) = 2 \exp\left(-\frac{m\sqrt{p}}{6}\right) \to 0, \label{atypical term vanish}
\end{align}
where we again use the fact that $m\sqrt{p} \to \infty$.

Putting (\ref{typical term vanish}) and (\ref{atypical term vanish}) back to (\ref{markov upper bound}), we have $\Pr(X_H \geq 2) \to 0$. That is, among all induced subgraph of size $v_H$ in $G$, if $G[\mathcal S] \cong G[\mathcal S']$, then $\mathcal S = \mathcal S'$ w.h.p. Algorithm \ref{alg: brute-force algo} outputs the first vertex subset $\hat{\mathcal S}$ that satisfies $G[\hat{\mathcal S}] \cong G[\mathcal S]$, it follows that $\Pr(\mathcal S = \hat{\mathcal S}) \to 1$ as $n \to \infty$.

\subsection{Proof of Achievability Theorem for Permutation Recovery} \label{sec: perm recovery achievability proof}

In this section, we prove Theorem \ref{thm: perm recovery achievability}. We begin with an important lemma from \cite{wri97}.

\begin{lemma} \label{lemma: wright}
    Let $G \sim \mathrm{ER}(n,p)$. If $np - \log n \to \infty$, then $G$ has a trivial automorphism group with probability $1 - o(1)$. Conversely, if $np- \log n \to -\infty,$ then the probability that $G$ has a trivial automorphism group is bounded away from 1. 
\end{lemma}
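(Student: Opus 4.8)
I would use the standing assumption $p = p(n) \le 1/2$ together with the identity $\mathrm{Aut}_G = \mathrm{Aut}_{G^c}$, and treat the two directions separately.

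\emph{Converse: $np - \log n \to -\infty$.} Under $p \le 1/2$ this forces $p \to 0$ (otherwise $np \gtrsim n \gg \log n$). Let $Z$ be the number of isolated vertices of $G$. Since $p \to 0$ we have $(1-p)^{n-1} = e^{-np + o(1)}$, hence $\EE[Z] = n(1-p)^{n-1} = e^{\log n - np + o(1)} \to \infty$. The covariance of two distinct isolation indicators equals $p(1-p)^{2n-3}$, so $\mathrm{Var}(Z) \le \EE[Z] + \tfrac{p}{1-p}\,\EE[Z]^2 = o\!\left(\EE[Z]^2\right)$, and Chebyshev gives $Z/\EE[Z] \to 1$ in probability, whence $\Pr(Z \ge 2) \to 1$. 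Two isolated vertices can be transposed, so in fact $\Pr(\mathrm{Aut}_G \ne \{\mathrm{id}\}) \to 1$, which is stronger than the claim.

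\emph{Achievability: $np - \log n \to \infty$.} Here the plan is the first-moment method. Since every nontrivial finite group contains an element of prime order,
\[
    \Pr(\mathrm{Aut}_G \ne \{\mathrm{id}\}) \le \sum_{\substack{\sigma \ne \mathrm{id}\\ \mathrm{ord}(\sigma)\,\text{prime}}} \Pr(\sigma \in \mathrm{Aut}_G),
\]
and for a fixed $\sigma$ one has $\Pr(\sigma \in \mathrm{Aut}_G) = \prod_{O}\bigl(p^{|O|} + (1-p)^{|O|}\bigr)$, the product over the orbits $O$ of $\sigma$ acting on the $\binom{n}{2}$ potential edges. Writing $\beta := p^2 + (1-p)^2 = 1 - 2p(1-p)$ and using $p^{\ell} + (1-p)^{\ell} \le (1-p)^{\ell-2}\beta \le \beta$ for $\ell \ge 2$, $p \le 1/2$, the costliest $\sigma$ turn out to be products of disjoint transpositions, for which a direct orbit count gives $\Pr(\sigma \in \mathrm{Aut}_G) = \beta^{\,t(n-t-1)}$ when $\sigma$ is a product of $t$ of them. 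Combining with the exact count $\tfrac{n!}{(n-2t)!\,2^t t!} \le \tfrac{n^{2t}}{2^t t!}$ — it is essential to retain the $1/t!$, which is what tames large $t$ — bounds the contribution of these $\sigma$ by $\sum_{t \ge 1}\tfrac{1}{t!}\bigl(\tfrac{n^2}{2}\beta^{\,n-t-1}\bigr)^t$, whose $t=1$ term $\binom n2 \beta^{n-2}$ dominates; and the hypothesis $np - \log n \to \infty$ (together with $p \le 1/2$) gives $np(1-p) - \log n \to \infty$, so that $\binom n2 \beta^{n-2} \le \tfrac{e}{2}\exp\!\bigl(-2\,(np(1-p) - \log n)\bigr) \to 0$ and the whole sum vanishes. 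This is where the threshold enters, and an analogous estimate disposes of every prime-order $\sigma$ whose support is not a constant fraction of $n$ (and, in fact, of all $\sigma$ of order $2$).

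The step I expect to be the genuine obstacle — and the reason the sharp threshold is delicate rather than a one-line estimate — is automorphisms with \emph{large support} in the window $np = (1+o(1))\log n$, which is still permitted by the hypothesis. There the union bound over permutations is not merely lossy but divergent: e.g.\ there are $\asymp (n/e)^{2n/3}$ fixed-point-free products of $3$-cycles, each of which lies in $\mathrm{Aut}_G$ with probability $(p^3 + (1-p)^3)^{\binom n2/3} = \exp\!\bigl(-(\tfrac12 + o(1))\,n\log n\bigr)$, so the union bound over this one family alone is already $\exp\!\bigl((\tfrac16 + o(1))\,n\log n\bigr) \to \infty$. This divergence is an artifact of strong correlation — the product of two automorphisms of a fixed graph is again an automorphism — and the standard remedy is to first condition on the degree sequence of $G$ (and, if necessary, on a few rounds of iterated degree refinement) and then union-bound only over permutations $\sigma$ with $\sigma(V_d) = V_d$ for every degree class $V_d$, of which there are far fewer with high probability once $np(1-p)$ is large enough. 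Carrying out this refinement is precisely the content of \cite{wri97}, and combined with the converse it yields the stated dichotomy.
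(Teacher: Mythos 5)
The paper does not actually prove this lemma: it is imported verbatim from \cite{wri97}, so there is no internal argument to compare yours against. Judged on its own, your sketch is sound where it is carried out. The converse via counting isolated vertices is correct (and proves the stronger statement that asymmetry fails w.h.p.): under the standing assumption $p\le 1/2$ the hypothesis forces $p\to 0$, your mean and covariance computations for the isolated-vertex count are right, and two isolated vertices indeed yield a transposition in $\mathrm{Aut}_G$. On the achievability side, your treatment of involutions checks out — the orbit count $\beta^{t(n-t-1)}$ for $t$ disjoint transpositions is exact, retaining the $1/t!$ does control large $t$, and the implication $np-\log n\to\infty \Rightarrow np(1-p)-\log n\to\infty$ (given $p\le 1/2$) is valid. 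Your diagnosis of the real difficulty is also accurate: in the window $np=(1+o(1))\log n$ the plain union bound over permutations genuinely diverges (your fixed-point-free $3$-cycle computation, giving a contribution of order $\exp((\tfrac16+o(1))n\log n)$, is correct), so a first-moment argument over all of $S_n$ cannot close the proof at the threshold.

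The one caveat is that you do not complete that hard half: the degree-sequence/refinement step that handles large-support automorphisms is only gestured at and delegated to \cite{wri97}. As a self-contained proof the achievability direction therefore still has a gap at exactly the delicate point; as a justification of the lemma it is no weaker than the paper's own bare citation, and it is more informative, since it supplies a full converse and explains precisely why the naive first-moment method must be supplemented. It is also worth stating explicitly, as you implicitly do, that the lemma as phrased needs $p\le 1/2$ (or the symmetric condition $\min\{np,n(1-p)\}-\log n\to\infty$): for $p$ close to $1$ the first assertion is false, and your reduction via $\mathrm{Aut}_G=\mathrm{Aut}_{G^c}$ is the right way to reconcile this with the paper's standing assumption.
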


From Theorem \ref{thm: set recovery achievability}, we can recover the correct $\mathcal S = \hat{\mathcal S}$ w.h.p. under condition (\ref{permutation achievability condition 1}). Therefore, using the fact that $H \sim \mathrm{ER}(m,p)$, under condition (\ref{permutation achievability condition 2}), $H$ has a trivial automorphism group with high probability. That is, conditioned on we have recovered a correct $\mathcal S = \hat{\mathcal S}$, there is a unique bijection $\hat{\pi}$ such that $H_\pi = H_{\hat{\pi}}$. Algorithm \ref{alg: brute-force algo} outputs the first bijection $\hat{\pi}$ that satisfies $H_\pi = H_{\hat{\pi}}$, it follows that $\Pr(\mathcal S = \hat{\mathcal S}, \pi = \hat{\pi}) \to 1$ as $n \to \infty$.

{\color{black}
A natural question for Theorem \ref{thm: perm recovery achievability} is whether one of the conditions can be absorbed by the other one. And the answer is no. To show that the conditions are not redundant, we simply construct a triplet sequence $(n,m(n), p(n))$ that satisfies the condition \ref{permutation achievability condition 1} but not condition \ref{permutation achievability condition 2}.

Consider the triplet sequence $(n,m(n),p(n))$ with $m(n) = n/2$ and $p(n) = \frac{2\log \frac{n}{2}}{n} = \frac{\log m}{m}$. Note that $p(n) \to 0$ as $n \to \infty$, we may approximate $h(p)$ by 
\begin{align} \label{binary entropy approximation}
    h(p) = -p\log p -(1-p) \log (1-p) = p\log \frac{1}{p}(1 + o(1)).
\end{align}
For $p(n) = \frac{\log m}{m}$, we may expand it as 
\begin{align*}
    p \log \frac{1}{p} & = \frac{\log m }{m} \log \frac{m}{\log m} \\
    & = \frac{(\log m)^2}{m} - \frac{(\log m)(\log \log m)}{m} \\
    & = \frac{(\log m)^2}{m}(1 - o(1)).
\end{align*}
Therefore, condition (\ref{permutation achievability condition 1}) holds as 
\begin{align*}
    \frac{m}{2}h(p) - \log n & = \frac{m}{2}\frac{(\log m)^2}{m}(1 + o(1)) - \log n \\
    & = \frac{1+o(1)}{2} \left(\log \frac{n}{2}\right)^2 - \log n \\
    & \to \infty.
\end{align*}
But condition (\ref{permutation achievability condition 2}) fails as 
\begin{align*}
    mp - \log m  = m \cdot \frac{\log m }{m}- \log m = 0 \not \to \infty.
\end{align*}

The above example also demonstrates that there is a region where we can recover the set $\hat{\mathcal S}$ correctly w.h.p., but any algorithm cannot recover the permutation $\hat{\pi}$ conditioned on $\hat{\mathcal S} = \mathcal S$ w.h.p.}

\subsection{Proof of Converse Theorem} \label{sec: converse proof}

In this section, we prove Theorem \ref{thm: converse}. Instead of employing a \mic{second-moment} method that is commonly used in many classical random graph theory problems, we adopt an information-theoretic-based approach to establish the converse region. \mic{In information theory, a classical result in lossless compression states that if the compression rate is below the fundamental limit, i.e., the entropy of the source, then no decoding scheme can reconstruct the source with vanishing error probability. It is natural to ask whether an analogous extension exists for non-conventional data types such as lattices\cite{sha53}, graphs, etc. In particular, recent work \cite{cho12} has studied the lossless compression of unlabeled graphs, leading to the notion of \textit{structural entropy}, which plays a role analogous to Shannon entropy by characterizing the fundamental limits of compressing graphs. This motivates us to model the subgraph alignment problem as an equivalent problem in the language of lossless compression and to prove a converse in a similar sense.} We begin by formulating the subgraph alignment problem as a \mic{lossless compression} scenario.

\begin{itemize}
    \item \textbf{Initialization.} Initially, both the encoder and decoder agree on the graph parameters $(n,m,p)$. Also, they share a base graph, which is a realization of the Erd\"os-R\'enyi random graph \mic{$G \sim \mathrm{ER}(n,p)$}.

    \item \textbf{Encoder.} On the encoder side, let $\binom{[n]}{m} = \{ S \subset V_G: |S| = m\}$ be the source to be compressed, it is clear that $\left|\binom{[n]}{m}\right| = \binom{n}{m}$. The source is $\mathcal{S} \sim \mathrm{Unif}\binom{[n]}{m}$.

    Suppose the encoder wants to compress the source $\mathcal S \in \binom{[n]}{m}$, then it sends the unlabeled graph $H = G[\mathcal S]$ to the decoder.\footnote{Since the encoder only compress the unlabeled graph, it is equivalent to the label anonymizing procedure by applying a random permutation on the vertices.}

    \item \textbf{Decoder.} The decoder receives the unlabeled graph (source) $H = G[\mathcal S]$ losslessly. Then it uses an estimator base on observing $G$ and $H$ to a find a vertex set $\hat{\mathcal S} \in \binom{[n]}{m}$ such that $G[\hat{\mathcal S}] \cong H$, and declare $\hat{\mathcal S}$ as the estimation. \footnote{Since the graph is compressed losslessly, such $\hat{\mathcal S}$ always exists.}
\end{itemize}

\mic{An illustration of the lossless compression setup is given in Figure \ref{fig: lossless compression setup}.}

\begin{figure}[htbp]
\centering
\begin{tikzpicture}[scale=0.75, every node/.style={scale=0.9}, >=stealth]

  \node[draw, minimum width=2.5cm, minimum height=1.2cm] (enc) at (0,0) {\large{Encoder}};
  \node[draw, minimum width=2.5cm, minimum height=1.2cm] (dec) at (6,0) {\large{Decoder}};

  \draw[->] (-3,0) -- (enc.west) node[midway, above] {$\mathcal S$};

  \draw[->] (enc.east) -- (dec.west) node[midway, above] {$H = G[\mathcal S]$};

  \draw[->] (dec.east) -- (9,0) node[midway, above] {$\hat{\mathcal S}$};

\end{tikzpicture}
\caption{\mic{The encoder encodes the source $\mathcal S$ by compressing the unlabaled graph $H = G[\mathcal S]$, and the decoder uses both $G$ and $H$ to reconstruct the estimation $\hat{\mathcal S}$. Unlike the general lossless compression setup, the encoder has no freedom to design the encoding function.}}
\label{fig: lossless compression setup}
\end{figure}
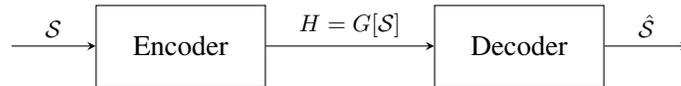

We next introduce the concept of structural entropy, which quantifies the minimum amount of resources required to compress an unlabeled graph losslessly. Structural entropy is analogous to Shannon entropy, but specifically for graph-based sources.

In the following, let $\mathcal G_n$ denote the collection of all labeled graphs on $n$ vertices. We begin by defining graph entropy and structural entropy \cite{cho12}.

\begin{definition}[Graph Entropy] \label{defn: graph entropy}
    Let $P$ be a probability distribution on $\mathcal G_n$. Let $\mathcal G$ be a random graph model with distribution $P$. Then the graph entropy of $\mathcal G$ is defined as 
    \begin{align}\label{graph entropy}
        \mathbf H(\mathcal G)= - \mathbb{E}_{G \sim P}(\log P(G)) = -\sum_{G \in \mathcal G_n} P(G) \log P(G).
    \end{align}
\end{definition}

Next, we introduce the random structure model $\mathcal U$ for the unlabeled version of a random graph model, i.e., the vertices are indistinguishable, and the graphs that are isomorphic in $\mathcal G_n$ belong to the same structure $U \in \mathcal U$. For a given structure $U \in \mathcal U$, the probability of $U$ can be computed as 
\begin{align}
    P(U) = \sum_{G \cong U, G \in \mathcal G_n} P(G). \label{random structure probability}
\end{align}

\begin{definition}[Structural Entropy] \label{defn: structural entropy}
    Let $P$ be a probability distribution on $\mathcal G_n$. Let $\mathcal G$ be a random graph model with distribution $P$. Then the structural entropy of $\mathcal G$ is defined as the entropy of the random structure $\mathcal U$, 
    \begin{align} \label{sturctural entropy}
        \mathbf H(\mathcal U) = - \sum_{U \in \mathcal U} P(U) \log P(U).
    \end{align}
\end{definition}

\begin{theorem}[Structural Entropy of Erd\"os-R\'enyi Model] \label{thm structural entropy of ER}
    Suppose $np - \log n \to \infty$, and let $P$ be the Erd\"os-R\'enyi measure on $\mathcal G_n$. Then the structural entropy of $\mathcal G \sim P$ is 
    \begin{align} \label{structural entropy ER}
        \mathbf H(\mathcal U) = \binom{n}{2}h(p) - \log n! + o(1).
    \end{align}
    For general $n$ and $p$, the structural entropy is upper bounded by 
    \begin{align} \label{general structural entropy upper bound}
        \mathbf H(\mathcal U) \leq \binom{n}{2}h(p).
    \end{align}
\end{theorem}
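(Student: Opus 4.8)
The plan is to reduce the structural entropy to the labeled graph entropy plus an automorphism correction term, and then to show that this correction is $o(1)$ in the stated regime.

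First I would record the labeled entropy. Since $\mathcal G \sim \mathrm{ER}(n,p)$ is obtained by drawing the $\binom n2$ potential edges i.i.d.\ $\mathrm{Bern}(p)$, we have $\mathbf H(\mathcal G) = \binom n2 h(p)$. The general bound \eqref{general structural entropy upper bound} is then immediate: the random structure $\mathcal U$ is a deterministic function of $\mathcal G$ (its isomorphism class), so $\mathbf H(\mathcal U) \le \mathbf H(\mathcal G) = \binom n2 h(p)$.

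For the sharp asymptotic \eqref{structural entropy ER} I would use two structural facts about the Erd\"os--R\'enyi measure $P$: it is isomorphism-invariant (it depends on a labeled graph only through its edge count), and by the orbit--stabilizer theorem each structure $U \in \mathcal U$ has exactly $n!/|\mathrm{Aut}(U)|$ labeled representatives in $\mathcal G_n$. Hence $P(U) = \frac{n!}{|\mathrm{Aut}(U)|}P(S)$ for any representative $S \cong U$; substituting $\log P(U) = \log n! - \log|\mathrm{Aut}(U)| + \log P(S)$ into \eqref{sturctural entropy} and regrouping the sum over structures into a sum over labeled graphs gives the identity
\begin{align*}
    \mathbf H(\mathcal U) \;=\; \mathbf H(\mathcal G) - \log n! + \mathbb E_{\mathcal G \sim P}\bigl[\log|\mathrm{Aut}(\mathcal G)|\bigr] \;=\; \binom n2 h(p) - \log n! + \mathbb E\bigl[\log|\mathrm{Aut}(\mathcal G)|\bigr].
\end{align*}
So the whole theorem reduces to showing $\mathbb E[\log|\mathrm{Aut}(\mathcal G)|] = o(1)$ when $np - \log n \to \infty$.

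To handle this, I would combine the deterministic bound $\log|\mathrm{Aut}(\mathcal G)| \le \log n! = O(n\log n)$ with tail estimates on $|\mathrm{Aut}(\mathcal G)|$, via
\[
    \mathbb E[\log|\mathrm{Aut}(\mathcal G)|] \;\le\; (\log M)\,\Pr\bigl(\mathrm{Aut}(\mathcal G)\neq\{\mathrm{id}\}\bigr) \;+\; (\log n!)\,\Pr\bigl(|\mathrm{Aut}(\mathcal G)| > M\bigr)
\]
for a threshold $M=M(n)$ to be chosen. The two probabilities would be controlled by a first-moment computation over the cycle types of candidate automorphisms (in the spirit of Wright, Lemma \ref{lemma: wright}): the dominant contribution to $\Pr(\mathrm{Aut}(\mathcal G)\neq\{\mathrm{id}\})$ comes from a single pair of ``twin'' vertices and equals $\binom n2(p^2+(1-p)^2)^{n-2}(1+o(1))$, which a short Taylor expansion (treating $p$ bounded away from $0$ and $p\to 0$ separately) shows is $o(1)$ under $np-\log n\to\infty$ with $p\le 1/2$; and having $|\mathrm{Aut}(\mathcal G)| > 2^j$ essentially forces about $j$ (roughly disjoint) such twin pairs, so $\Pr(|\mathrm{Aut}(\mathcal G)|>2^j)$ decays geometrically in $j$. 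Choosing $M = 2^{j^\ast}$ with $j^\ast\to\infty$ slowly enough that $\log M = o((\log n)^2)$ (e.g.\ $j^\ast\asymp \log n/(np-\log n)$) but fast enough that $\Pr(|\mathrm{Aut}(\mathcal G)|>M)=o(1/(n\log n))$ makes both terms vanish.

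I expect the last step to be the genuine obstacle: when $np$ is only marginally above $\log n$, the probability that $\mathcal G$ has a nontrivial automorphism can tend to $0$ arbitrarily slowly, so the naive ``probability of asymmetry times $\log n!$'' estimate is not enough and the two-threshold truncation above is essential; moreover it relies on the standard but somewhat delicate enumeration of automorphisms by cycle type, which shows that large automorphism groups are exponentially unlikely in this regime. Alternatively, one can invoke the corresponding result of Choi and Szpankowski \cite{cho12}, whose proof carries out exactly this analysis.
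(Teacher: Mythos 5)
The paper does not actually prove this theorem; it is imported from Choi and Szpankowski \cite{cho12}, and your reduction is precisely the route taken there: the orbit--stabilizer identity $P(U)=\frac{n!}{|\mathrm{Aut}(U)|}P(S)$ (valid because the Erd\"os--R\'enyi measure depends only on the edge count, hence is isomorphism-invariant) gives
$\mathbf H(\mathcal U)=\binom n2 h(p)-\log n!+\mathbb E\bigl[\log|\mathrm{Aut}(\mathcal G)|\bigr]$,
and the data-processing step for \eqref{general structural entropy upper bound} is correct. So the architecture of your argument matches the cited source, and your fallback of invoking \cite{cho12} is exactly what the paper does. You also correctly identify the real crux: since the symmetry probability can tend to $0$ arbitrarily slowly when $np-\log n\to\infty$, the naive bound $\log n!\cdot\Pr(\mathrm{Aut}\neq\{\mathrm{id}\})$ is insufficient, and indeed the condition usually quoted for \cite{cho12} is the stronger $np/\log n\to\infty$, so matching the hypothesis as stated here genuinely requires the finer tail analysis.

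The weak point is the tail estimate itself. The claim that $|\mathrm{Aut}(\mathcal G)|>2^j$ ``essentially forces about $j$ roughly disjoint twin pairs'' is not a valid deduction: a single class of $k$ mutually twin vertices already yields $k!$ automorphisms while containing only $\lfloor k/2\rfloor$ disjoint pairs, and automorphisms all of whose nontrivial cycles have length at least $3$ produce no twins at all. The geometric (in fact exponential) decay of $\Pr(|\mathrm{Aut}(\mathcal G)|\ge 2^j)$ must instead come from the standard union bound over permutations indexed by the number of displaced vertices (the Wright/Kim--Sudakov--Vu enumeration underlying Lemma \ref{lemma: wright}), which bounds the probability that some permutation moving $k$ vertices is an automorphism by roughly $\bigl(n e^{-\Theta(np)}\bigr)^{\Theta(k)}$. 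Moreover, your specific two-term truncation does not clearly close: with $j^\ast\asymp \log n/(np-\log n)$ the first term $\log M\cdot\Pr(\mathrm{Aut}\neq\{\mathrm{id}\})$ need not vanish when $np-\log n\to\infty$ arbitrarily slowly, since $\Pr(\mathrm{Aut}\neq\{\mathrm{id}\})$ may decay only like $e^{-\Theta(np-\log n)}$. The clean fix is to drop the truncation and use the layered bound $\mathbb E[\log|\mathrm{Aut}(\mathcal G)|]\le \log 2\sum_{j\ge 1}\Pr\bigl(|\mathrm{Aut}(\mathcal G)|\ge 2^j\bigr)$ together with the exponential-in-$j$ tail just described; with that replacement (or by citing the source as the paper does), your proposal is sound.
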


The structural entropy of a random graph model reflects the number of bits required to capture the minimum information necessary to describe its structure and compress the graph model losslessly. In particular, \cite{cho12} introduces a polynomial time algorithm that demonstrates the number of bits required to compress an unlabeled graph $G$ from an Erd\"os-R\'enyi model $\mathrm{ER}(n,p)$ can be asymptotically quantified by the structural entropy $H(\mathcal U)$ is provided in \cite{cho12}. Specifically, the expected length of the compressed codeword $L(U)$ satisfies 
\begin{align*}
    \mathbb{E}(L(U)) \leq \binom{n}{2} h(p) - n \log n + o(n \log n),
\end{align*}
aligning with the structural entropy.

Now we are ready to establish the converse region. Recall that the goal of the decoder is to estimate the source $\hat{\mathcal S}$ based on the observation $H = G[\mathcal S]$, with the base graph $G$ as the underlying side information. Therefore, it is infeasible for the decoder to recover $\hat{\mathcal S} = \mathcal S$ correctly with high probability if 
\begin{align} \label{infeasibility condition}
    \mathbf{H}(G[\mathcal S] \mid G) < \mathbf{H}(\mathcal S \mid G) = \mathbf{H}(\mathcal S),
\end{align}
where the equality follows from the assumption that the source is chosen independent of the graph generating process. Next, observe that by (\ref{general structural entropy upper bound}) 
\begin{align} \label{information for decoder}
    \mathbf{H}(G[\mathcal S] \mid G) \leq \mathbf{H}(G[\mathcal S]) \leq \binom{m}{2} h(p),
\end{align}
where we used conditioning to reduce entropy for the first inequality.

Also, as we assume the source $\mathcal S \sim \mathrm{Unif}\binom{[n]}{m}$ is uniform over a set of size $\binom{n}{m}$,
\begin{align}
    \mathbf{H}(\mathcal S) = \log \binom{n}{m} \geq m \log \frac{n}{m}.
\end{align}
Now note that the condition (\ref{new converse region 1}) implies that 
\begin{align} \label{intermediate infeasibility condition}
    \frac{m-1}{2} h(p) \leq \frac{m}{2}h(p) \ll \log \frac{n}{m}.
\end{align}
Multiplying both sides by $m$ on (\ref{intermediate infeasibility condition}) leads to the inequality
\begin{align*}
    \mathbf{H}(G[\mathcal S] \mid G) \leq \binom{m}{2} h(p) \ll m \log \frac{n}{m} \leq \mathbf{H}(\mathcal S).
\end{align*}
This is exactly the condition in (\ref{infeasibility condition}), which constitutes a converse region.

\subsection{Justification for Tight Thresholds} \label{sec: threshold proof}

\mic{In this section, we justify the tight threshold for exact set recovery established in Theorem \ref{thm: ac pair for set recovery}. The same analysis also applies to the tight threshold for exact permutation recovery in Theorem \ref{thm: ac pair for exact permutation recovery}.}

Note that the achievability condition (\ref{tight threshold achievability}) in Theorem \ref{thm: ac pair for set recovery} is precisely the condition (\ref{set achievability condition}) in Theorem \ref{thm: set recovery achievability}. Therefore, it remains to show that if any of the conditions in (\ref{tight threshold case 1}) or (\ref{tight threshold case 2}) holds, then the converse condition can be reduced to condition (\ref{tight threshold converse}).

Suppose (\ref{tight threshold case 1}) holds. The converse condition in (\ref{new converse region 1}) is simply
\begin{align*}
    \frac{m}{2}h(p) - \log n + \log m =  \frac{m}{2}h(p) - (1 - o(1))\log n \to -\infty.
\end{align*}

Suppose (\ref{tight threshold case 2}) holds. Then we are allowed to use a better bound for structural entropy in (\ref{structural entropy ER}). Therefore, the inequality in (\ref{information for decoder}) can be improved as 
\begin{align*}
    \mathbf{H}(G[\mathcal S] \mid G) \leq \mathbf{H}(G[\mathcal S]) \leq \binom{m}{2} h(p) - m! + o(1) \asymp \binom{m}{2} h(p) - m \log m.
\end{align*}

Therefore, under the simplified condition (\ref{tight threshold converse}), we still have
\begin{align*}
    \mathbf{H}(G[\mathcal S] \mid G) \leq \binom{m}{2} h(p) - m \log m \ll m \log n - m \log m \leq \mathbf{H}(\mathcal S).
\end{align*}

Hence, the converse still holds by a similar argument in Section \ref{sec: converse proof}.

\section{Conclusion and Future Research Direction} \label{sec: conclusion}

In this paper, we proposed the Erd\"os-R\'enyi subgraph pair model to study the subgraph alignment problem. We established information-theoretic limits for both exact set recovery and exact permutation recovery, including achievability and converse results. Further, our results coincide under certain conditions. 

The study of exact set recovery is not yet entirely completed. As discussed in Section \ref{app: comparison}, the achievability and converse conditions do not coincide in the most general scenario. We conjectured that a tighter converse condition can be established. 

{\color{black}
\begin{conjecture}
    Consider the Erd\"os-R\'enyi subgraph pair $(G, H_\pi) \sim \mathcal G(n,m,p)$.
    
    \noindent \underline{Achievability.} If 
    \begin{align}
        \frac{m}{2}h(p) -\log n \to \infty,
    \end{align}
    then the brute-force algorithm achieves exact set recovery w.h.p..

    \noindent \underline{Converse.} If 
    \begin{align}
        \frac{m}{2}h(p) -\log n \to -\infty,
    \end{align}
    then no algorithm achieves exact set recovery w.h.p..
\end{conjecture}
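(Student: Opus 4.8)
The achievability half of the conjecture is precisely Theorem~\ref{thm: set recovery achievability}, so the plan is to prove the converse, i.e.\ to close the $\log m$ gap between Theorem~\ref{thm: converse} and Theorem~\ref{thm: set recovery achievability} in the range $mp-\log m\not\to\infty$, $\log m\asymp\log n$ that is not already covered by Theorem~\ref{thm: ac pair for set recovery}. The first step is a reduction. By Proposition~\ref{prop: equivalence} it suffices to analyze the MAP estimator, and given the observation $(G,H_\pi)$ the admissible configurations are exactly the $X_H$ vertex subsets $S$ with $G[S]\cong H$, each carrying the same number $m!/|\mathrm{Aut}_H|$ of consistent relabelings; since the prior on $\mathcal S$ is uniform, the posterior of $\mathcal S$ is uniform over these $X_H$ subsets, so the optimal probability of error for exact set recovery equals $1-\mathbb{E}(1/X_H)\ge\tfrac12\Pr(X_H\ge 2)$, with $X_H$ the copy-count of Definition~\ref{defn:XH}. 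Hence the converse amounts to showing $\liminf_n\Pr(X_H\ge 2)>0$ whenever $\tfrac{m}{2}h(p)-\log n\to-\infty$; I will in fact aim for $X_H\to\infty$ in probability, which forces the error to tend to $1$. Note that the structural-entropy argument of Section~\ref{sec: converse proof} is, through the identity $\mathbf{H}(G[\mathcal S]\mid G)=\mathbf{H}(\mathcal S)-\mathbb{E}(\log X_H)$, exactly the first-moment-of-$\log$ estimate $\mathbb{E}(\log X_H)\ge\log\binom nm-\binom m2 h(p)$; this only reaches the $\log\tfrac nm$ threshold, and (as in Theorem~\ref{thm: ac pair for set recovery}) sharpening the bound on $\mathbf{H}(G[\mathcal S]\mid G)$ buys the improvement to $\log n$ only when $mp-\log m\to\infty$ or $\log m=o(\log n)$, so the remaining range genuinely requires a direct lower bound on the number of copies.

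The plan is therefore a second-moment argument on $X_H$, conditioned on $H$ lying in a good class $\mathcal G^\star_m$ --- a refinement of the typical set of Definition~\ref{defn: typical graph} that, since $H\sim\mathrm{ER}(m,p)$, also pins down the degree sequence of $H$ and the number of its low-degree vertices, with $\Pr(H\notin\mathcal G^\star_m)\to 0$ by Chernoff-type bounds as in the proof of Theorem~\ref{thm: set recovery achievability}. The key structural point is that in the only open range --- the sparse regime $mp=O(\log m)$ with $\log m\asymp\log n$ --- the dominant alternative copies are not near-disjoint from $\mathcal S$, but are \emph{local perturbations} of it: for a low-degree vertex $v\in\mathcal S$ and an outside vertex $w\notin\mathcal S$ whose edges to $\mathcal S\setminus\{v\}$ reproduce the neighborhood pattern of $v$ in $H$, the set $(\mathcal S\setminus\{v\})\cup\{w\}$ induces a graph isomorphic to $H$; more generally one swaps a small set of low-degree vertices of $\mathcal S$. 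A first-moment count over such perturbations (together, for larger $p$, with the classical count of essentially-disjoint induced copies) should give $\mathbb{E}(X_H-1\mid H)\to\infty$ for $H\in\mathcal G^\star_m$ exactly when $\tfrac{m}{2}h(p)-\log n\to-\infty$; the Paley--Zygmund inequality would then give $\Pr(X_H\ge 2\mid H)=\Omega(1)$ and a variance bound would upgrade this to $X_H\to\infty$ in probability, after which averaging over $H\in\mathcal G^\star_m$ and discarding $\{H\notin\mathcal G^\star_m\}$ closes the converse.

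The main obstacle is the variance estimate. The achievability proof only needs a first moment, whereas here the perturbation copies are heavily correlated --- a typical pair of one-vertex swaps of $\mathcal S$ shares $m-2$ vertices and almost all of its defining edge constraints --- so the naive expansion $\mathbb{E}(X_H^2\mid H)=\sum_t R_t$ over overlap sizes $t$ has large contributions near $t=m$ and does not visibly contract to $(1+o(1))\,\mathbb{E}(X_H\mid H)^2$. I would address this either by (a) restricting to a sparsified, near-independent sub-family of perturbations --- e.g.\ supported on a matching of disjoint swap-pairs, whose size still tends to infinity throughout the region --- or by (b) a sharper concentration tool (an Efron--Stein or Talagrand-type inequality, or a conditional second moment that peels off one swapped vertex at a time). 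A secondary difficulty is producing a first-moment count that is uniform across the whole open region, which runs from the very sparse regime, where $H$ has many isolated vertices and single-vertex swaps dominate, up to $mp\asymp\log m$, near the connectivity threshold of Lemma~\ref{lemma: wright}, where the structure of $H$ and hence of the useful perturbations changes; giving one estimate valid throughout, rather than a case analysis, is what I expect to be the most delicate part.
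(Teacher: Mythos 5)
This statement is a conjecture that the paper itself leaves open (the authors explicitly attribute the remaining $\log m$ gap to the difficulty of controlling $\mathrm{Aut}$ of sparse Erd\H{o}s--R\'enyi subgraphs), so there is no paper proof to compare against; the only question is whether your plan would settle it, and as written it does not. Your reduction is sound: given $(G,H_\pi)$ the posterior over vertex sets is uniform over the consistent sets, so the optimal success probability is $\mathbb{E}[1/X_H]$, and your identity $\mathbf{H}(G[\mathcal S]\mid G)=\mathbf{H}(\mathcal S)-\mathbb{E}[\log X_H]$ correctly explains why the structural-entropy argument of Section~\ref{sec: converse proof} stalls at $\log\frac{n}{m}$. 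But the two decisive steps --- the uniform first-moment lower bound on perturbation copies and, above all, the second-moment/variance control that you yourself flag as the main obstacle --- are only announced, not carried out, so the proposal is a research program rather than a proof.

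More seriously, the quantitative claim on which the program rests --- that the perturbation count diverges \emph{exactly when} $\frac{m}{2}h(p)-\log n\to-\infty$ --- is false, and computing it exposes a deeper issue. In the planted model, any isolated vertex $v$ of $H$ can be swapped with any outside vertex $w$ having no neighbors in $\mathcal S\setminus\{v\}$; the number of admissible $w$ is $\mathrm{Binom}\bigl(n-m,(1-p)^{m-1}\bigr)$ with mean $\approx n e^{-mp}$, and the number of isolated vertices of $H$ concentrates around $m e^{-mp}$, independently of the sign of $\frac{m}{2}h(p)-\log n$. Thus w.h.p.\ there are many alternative consistent sets whenever $mp\le(1-\epsilon)\log m$ and $ne^{-mp}\to\infty$. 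For instance $m=n^{4/5}$, $p=\frac{\log m}{2m}$ gives $\frac{m}{2}h(p)-\log n\asymp(\log n)^2\to\infty$, i.e.\ it lies in the claimed achievability region of the conjecture and of condition~(\ref{set achievability condition}), yet w.h.p.\ there are $n^{\Omega(1)}$ swap sets, so by your own reduction every estimator fails. So this route cannot yield the conjectured threshold, and it in fact suggests the conjecture's achievability half needs an additional condition ruling out such swaps (in the spirit of $mp-\log m\to\infty$); the tension traces back to the first-moment step behind Theorem~\ref{thm: set recovery achievability}, which evaluates $\mathbb{E}(X_H\mid H)$ as if candidate sets saw fresh edges, ignoring candidates that overlap the planted $\mathcal S$. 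On the other hand, in the regime genuinely left open by Theorems~\ref{thm: converse} and~\ref{thm: ac pair for set recovery} (where $\log m\asymp\log n$, forcing $mp=O(1)$), the swap count factorizes into two independent, concentrating quantities, so a converse there follows from this direct count alone --- no Paley--Zygmund or variance argument is needed. In short, the obstacle is not the second moment you anticipate, but the fact that the conjectured threshold does not match the swap mechanism you correctly identified; the constructive next step is to verify this calculation and reformulate the conjecture accordingly.
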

}
\mic{We believe the conditions in the above conjecture give the tight threshold for the exact subgraph alignment problem for two main reasons. The first reason is that the achievability and converse conditions coincide under certain extra assumptions, as described in Theorem \ref{thm: ac pair for set recovery} and Theorem \ref{thm: ac pair for exact permutation recovery}. The second reason is elaborated below.}

\mic{The current converse differs from the conjectured converse by an additional $\log m$ factor. This gap arises from the crude upper bound $m!$ on the automorphism number of the subgraph, which contributes an extra term of order $\log m$ (under suitable factorization). This appears in the analysis of both the standard second-moment method and the lossless compression argument. The main difficulty in eliminating this extra factor boils down to the problem of analyzing the expected automorphism number of Erd\"os-R\'enyi graphs in the general parameter regime, which is left as a possible direction for future research.}

From an algorithmic perspective, the achievability results in this work rely on a brute-force estimator, which has a prohibitively high computational complexity for the subgraph alignment problem. Therefore, a natural question is whether there is any efficient algorithms that achieves set recovery with theoretical guarantees.

Beyond the current framework, several extensions of the subgraph alignment problem are possible. One direction is to consider a more general graph model -- such as the subsampled Erd\"os-R\'enyi subgraph pair model described in \cite{shiu25}, or random graph models beyond Erd\"os-R\'enyi. In addition, in the study of graph matching, it is often common to consider other forms of recovery criteria, classic examples are almost-exact recovery and partial recovery.


%
\section*{Acknowledgment}
This work was supported in part by the Natural Sciences and Engineering Research Council of Canada (NSERC) Discovery Grant RGPIN-2019-05448, the Aarhus Universitets Forskningsfond project number AUFF 39001, and the NordForsk Nordic University Cooperation on Edge Intelligence (Grant No. 168043).

\appendices

\section{Proof of Proposition \ref{prop: equivalence}} \label{app: equivalence}
\textcolor{black}{
In the following, we show that for each $(G, H_\pi)$, the sets $\mathcal O_{\mathrm{BF}}$ and $\mathcal O_{\mathrm{MAP}}$ are the same. Equivalently, we show that 
\begin{align*}
    \Pr(\mathcal S = S, \Sigma = \sigma \mid \mathcal G = G, \mathcal H = H_\pi) \propto \mathbf{1}\{\sigma(G[S]) = H_\pi\}.
\end{align*}
By Bayes' rule, we have
\begin{align*}
    \Pr(\mathcal S = S, \Sigma = \sigma \mid \mathcal G = G, \mathcal H = H_\pi) 
    & = \frac{\Pr(\mathcal G = G, \mathcal H = H_\pi \mid \mathcal S = S, \Sigma = \sigma)\Pr(\mathcal S = S, \Sigma = \sigma)}{\Pr(\mathcal G = G, \mathcal H = H_\pi)} \\
    & = \frac{\Pr(\mathcal G = G, \mathcal H = H_\pi \mid \mathcal S = S, \Sigma = \sigma)\Pr(\mathcal S = S, \Sigma = \sigma)}{\sum_{(S', \sigma') \in \Theta} \Pr(\mathcal G = G, \mathcal H = H_\pi \mid \mathcal S = S', \Sigma = \sigma')
    \Pr(\mathcal S = S', \Sigma = \sigma')} \\
    & = \frac{\Pr(\mathcal G = G, \mathcal H = H_\pi \mid \mathcal S = S, \Sigma = \sigma)}{\sum_{(S', \sigma') \in \Theta} \Pr(\mathcal G = G, \mathcal H = H_\pi \mid \mathcal S = S', \Sigma = \sigma')} \\
    & \propto \Pr(\mathcal G = G, \mathcal H = H_\pi \mid \mathcal S = S, \Sigma = \sigma).
\end{align*}
On the other hand, $H_\pi = (G[S])_\pi$ is a function of $G, S, \pi$, so
\begin{align*}
    \Pr(\mathcal G = G, \mathcal H = H_\pi \mid \mathcal S = S, \Sigma = \sigma) & = \Pr(\mathcal G = G \mid \mathcal S = S, \Sigma = \sigma) \Pr(\mathcal H = H_\pi \mid \mathcal G = G, \mathcal S = S, \Sigma = \sigma) \\
    & = \Pr(\mathcal G = G \mid \mathcal S = S, \Sigma = \sigma) \mathbf{1}\{\sigma(G[S]) = H_\pi\} \\
    & = \Pr(\mathcal G = G) \mathbf{1}\{\sigma(G[S]) = H_\pi\} \\
    & \propto \mathbf{1}\{\sigma(G[S]) = H_\pi\},
\end{align*}
which is the desired result. Therefore, they always output the same estimates, assuming we use the same tie-breaking rule for both algorithms.
}
\section{Comparison} \label{app: comparison}

In this section, we analyze our main results. In particular, we compare Theorem \ref{thm: set recovery achievability} and Theorem \ref{thm: converse} with the best-known results in Section \ref{sec: comparison achievability} and Section \ref{sec: comparison converse} respectively. Then, we examine the necessity of including multiple conditions in Theorem \ref{thm: perm recovery achievability}, showing that each captures distinct regimes and that neither condition is redundant. 

\mic{In our earlier \lele{conference version}~\cite{shiu25}, we established information-theoretic results, which act as a baseline for comparison to this paper. \lele{Let us first recall the achievability and converse regions as follows.}}

\begin{theorem}[\mic{Information-theoretic limit in \cite{shiu25}}] \label{thm: best-known}
    Consider the Erd\"os-R\'enyi subgraph pair $(G, H_\pi) \sim \mathcal G(n,m,p)$. 

    \noindent\underline{Achievability:} If 
    \begin{align} \label{best known achievability}
        \frac{m}{2}\log\frac{1}{1-p} - \log n \to \infty,
    \end{align}
    then Algorithm \ref{alg: brute-force algo} achieves exact set recovery w.h.p.

    \noindent\underline{Converse:} If
    
    \begin{align} 
        \begin{cases}
            m^2p = O(1), \\
            m = O (\sqrt{n}), 
        \end{cases}  \label{old converse region 1}
    \end{align}
    then no algorithm can achieve exact set recovery w.h.p..
\end{theorem}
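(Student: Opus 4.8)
We may assume $p\in[0,1/2]$ throughout, by Lemmas~\ref{lemma: complement graph} and~\ref{lemma: p <= 1/2}. Both directions revolve around the random variable $X_H$ counting the induced copies of $H:=G[\mathcal S]$ inside $G$ (Definition~\ref{defn:XH}), whose mean is supplied by Lemma~\ref{lemma: close form expectation}: exact set recovery succeeds with high probability exactly when $X_H=1$ with high probability, and more precisely, since Proposition~\ref{prop: equivalence} makes the posterior of $\mathcal S$ uniform over the $X_H$ vertex subsets consistent with $(G,H_\pi)$, the success probability of \emph{any} estimator is at most $\mathbb{E}[1/X_H]$.

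\emph{Achievability.} The plan is a first-moment (union-bound) argument. Because $h(p)-\log\frac{1}{1-p}=p\log\frac{1-p}{p}\ge 0$ on $[0,1/2]$, the hypothesis~(\ref{best known achievability}) is stronger than~(\ref{set achievability condition}), so the statement already follows from Theorem~\ref{thm: set recovery achievability}. A self-contained route bounds $\mathbb{E}[X_H\mid H]$ by using $\binom{n}{v_H}\tfrac{v_H!}{\mathrm{Aut}_H}\le n^m$ together with $p^{e_H}(1-p)^{\binom{m}{2}-e_H}\le(1-p)^{\binom{m}{2}}$ (valid since $p\le 1-p$), which gives the \emph{deterministic} bound $\mathbb{E}[X_H\mid H]\le\exp\!\big(m\log n-\binom{m}{2}\log\tfrac{1}{1-p}\big)$; this needs no typicality splitting and vanishes precisely under~(\ref{best known achievability}). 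Markov's inequality then kills the contribution of competing subsets $\mathcal S'$ disjoint from $\mathcal S$, and a short separate union bound over the overlap size $\ell=m-|\mathcal S\cap\mathcal S'|$ (an $\ell$-overlap copy has at least $\ell m/2$ free edges, hence probability at most $(1-p)^{\ell m/2}$ for each of its $O(m^\ell)$ embeddings) absorbs the rest.

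\emph{Converse.} I would show that under $m^2p=O(1)$ and $m=O(\sqrt{n})$ one has $X_H\to\infty$ in probability, whence $\mathbb{E}[1/X_H]\le\Pr(X_H\le t)+1/t\to 0$ for every fixed $t$, so no estimator succeeds with high probability. With probability bounded away from $0$, the subgraph $H\sim\mathrm{ER}(m,p)$ is the empty graph (since $\binom{m}{2}p=O(1)$), and then $X_H$ is the number of independent $m$-sets in the very sparse graph $G$; a second-moment (Chebyshev) computation, using that a random $m$-set is independent with probability $(1-p)^{\binom{m}{2}}=\Theta(1)$ and that $\binom{n}{m}\ge(n/m)^m\to\infty$ when $m=O(\sqrt{n})$, gives $X_H=(1-o(1))\binom{n}{m}(1-p)^{\binom{m}{2}}\to\infty$ on that event. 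The same reasoning, applied to a typical $H$ with $e_H=O(1)$ edges (for which $\tfrac{m!}{\mathrm{Aut}_H}p^{e_H}=\big(O(m^2p)\big)^{e_H}=\Theta(1)$), upgrades this to $X_H\to\infty$ unconditionally. Alternatively, the converse is subsumed by Theorem~\ref{thm: converse}, since~(\ref{old converse region 1}) implies~(\ref{new converse region 1}).

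\emph{Main obstacle.} The delicate step is the variance estimate in the converse: one must bound, uniformly over typical $H$ and over all overlap sizes, the covariances between the copy-indicators of overlapping vertex subsets and verify that the second moment stays within a constant factor of the first moment squared. The two hypotheses $m^2p=O(1)$ and $m=O(\sqrt{n})$ are precisely what keep every correction term bounded, and relaxing either is what forces the sharper information-theoretic argument behind Theorem~\ref{thm: converse}. On the achievability side the only mild nuisance is the separate treatment of subsets overlapping $\mathcal S$, which the crude deterministic first-moment bound already absorbs under~(\ref{best known achievability}).
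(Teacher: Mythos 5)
Your primary route is exactly how the paper disposes of this statement: Theorem \ref{thm: best-known} is imported from \cite{shiu25} and is not reproved here; instead, Appendix \ref{app: comparison} shows it is subsumed by the new results, with Lemma \ref{lemma: achievability improvement} verifying that \eqref{best known achievability} implies \eqref{set achievability condition} (since $h(p)-\log\frac{1}{1-p}=p\log\frac{1-p}{p}\ge 0$ on $[0,1/2]$), so the achievability follows from Theorem \ref{thm: set recovery achievability}, and Lemma \ref{lemma: converse improvement} verifying that \eqref{old converse region 1} implies \eqref{new converse region 1}, so the converse follows from Theorem \ref{thm: converse}. On that route your proposal is correct and matches the paper.

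Two caveats about your self-contained alternatives (which the paper does not rely on). First, in the achievability sketch the overlapping competitors are not absorbed as cheaply as you claim: for a competing set at overlap-deficit $\ell$ the number of candidate isomorphisms to $H$ is $m!$, not $O(m^\ell)$ (think of a nearly empty $H$), so the per-embedding free-edge bound $(1-p)^{\ell m/2}$ times your embedding count does not close under \eqref{best known achievability}; a correct argument must play the isomorphism count off against $\mathrm{Aut}_H$ on the shared part, which is precisely the technical work hiding behind the first-moment calculation. Second, in the converse sketch, at the boundary $m\asymp\sqrt n$ with $m^2p=\Theta(1)$ the second moment is only $O(\mathbb{E}[X_H]^2)$ rather than $(1+o(1))\mathbb{E}[X_H]^2$ (the term from pairs of $m$-sets sharing a single vertex contributes $\Theta(m^2/n)=\Theta(1)$), so Chebyshev does not give $X_H\to\infty$ in probability; Paley--Zygmund only yields $X_H\ge\theta\,\mathbb{E}[X_H]$ with probability bounded away from zero. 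That weaker conclusion still suffices, since the success probability of any estimator is at most $\mathbb{E}[1/X_H]$, which is then bounded away from $1$, and the theorem only asserts that no algorithm succeeds w.h.p.; but your stated claims ($X_H=(1-o(1))\mathbb{E}[X_H]$ and $\mathbb{E}[1/X_H]\to 0$) are stronger than what this computation delivers.
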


\subsection{Comparison between Achievability Results} \label{sec: comparison achievability}
{\color{black}
In this section, we show that our new achievability region \eqref{set achievability condition} in Theorem \ref{thm: set recovery achievability} is an improvement of the achievability region \eqref{best known achievability} in Theorem \ref{thm: best-known}. In addition, we highlight that the improvement is strict for most $p$.\footnote{A strict improvement refers to an improvement in the asymptotic region.} We now formally state the improvement on the achievability region in the following lemma.

\begin{lemma}\label{lemma: achievability improvement}
    Under the condition $\frac{m}{2}\log(1-p) - \log n \to \infty$ in (\ref{best known achievability}), the difference between the \lele{left-hand-side (LHS)} of conditions (\ref{set achievability condition}) and (\ref{best known achievability})
    \begin{align*}
        \left(\frac{m}{2}\log(1-p)- \log n\right) - \left(\frac{m}{2}h(p) - \log n\right),
    \end{align*}
    is always non-negative for any $p \in [0,1/2]$ and $m \in \mathbb{N}$. Moreover, the difference is unbounded as $n \to \infty$ if any of the following conditions are satisfied:
    \begin{align}
        \mathrm{(i)} \quad & p(n) \not\to \frac{1}{2}, \qquad \text{ or } \\
        \mathrm{(ii)} \quad & p(n) \to \frac{1}{2} \quad \text{ and } \quad p = \frac{1}{2} - \omega(m^{-1}).
    \end{align}
\end{lemma}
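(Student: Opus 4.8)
\textbf{Proof proposal for Lemma \ref{lemma: achievability improvement}.}

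The plan is to analyze the function $g(p) := h(p) - \log(1-p) = -p\log p - (1-p)\log(1-p) + \log\frac{1}{1-p} = -p\log p - (1-p)\log(1-p) - \log(1-p)$, which after collecting the $\log(1-p)$ terms equals $-p\log p - (2-p)\log(1-p)$. Since the difference in the statement is exactly $\frac{m}{2}\big(\log(1-p) - h(p)\big) = -\frac{m}{2}g(p)$, wait --- let me recompute: the stated difference is $\left(\frac{m}{2}\log(1-p) - \log n\right) - \left(\frac{m}{2}h(p) - \log n\right) = \frac{m}{2}\big(\log(1-p) - h(p)\big)$. But $\log(1-p) \le 0$ and $h(p)\ge 0$, so this is $\le 0$, which contradicts the claimed non-negativity. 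So the intended quantity must instead be the improvement of the \emph{new} region over the \emph{old} one, i.e. $\left(\frac{m}{2}h(p) - \log n\right) - \left(\frac{m}{2}\log\frac{1}{1-p} - \log n\right) = \frac{m}{2}\big(h(p) - \log\frac{1}{1-p}\big)$. First I would establish the pointwise inequality $h(p) \ge \log\frac{1}{1-p}$ for $p\in[0,1/2]$; equivalently, defining $\phi(p) := h(p) - \log\frac{1}{1-p} = -p\log p - (1-p)\log(1-p) + \log(1-p) = -p\log p + p\log(1-p) = p\log\frac{1-p}{p}$, I note $\phi(p) = p\log\frac{1-p}{p}\ge 0$ precisely when $\frac{1-p}{p}\ge 1$, i.e. $p\le 1/2$, with equality only at $p=1/2$ (and the $p=0$ case handled by the convention $0\log 0 = 0$). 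Hence the difference $\frac{m}{2}\phi(p) = \frac{mp}{2}\log\frac{1-p}{p}$ is non-negative for all $p\in[0,1/2]$ and $m\in\mathbb{N}$, giving the first claim.

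Next I would handle the unboundedness in case (i). Since $\phi$ is continuous on $[0,1/2]$, strictly positive on $[0,1/2)$, and vanishes only at $p=1/2$, the hypothesis $p(n)\not\to\frac12$ means there is a subsequence and a constant $p_0 < 1/2$ with $p(n)\le p_0$ (or a symmetric statement bounding $\phi$ below) along it; more carefully, $p(n)\not\to 1/2$ gives $\liminf_n \phi(p(n)) > 0$ is not quite automatic, so I would instead argue: $p(n)\not\to 1/2$ implies there exists $\delta>0$ and infinitely many $n$ with $|p(n) - 1/2|\ge\delta$, hence $p(n)\le 1/2-\delta$ along that subsequence (using $p\le 1/2$), hence $\phi(p(n))\ge \phi(1/2-\delta) =: c > 0$ along it. Combined with the standing hypothesis $\frac{m}{2}\log\frac{1}{1-p} - \log n\to\infty$, which forces $m\to\infty$ (since $\log\frac{1}{1-p}\le\log 2$ is bounded, we need $m/2 \cdot O(1) \ge \frac{m}{2}\log\frac{1}{1-p}\to\infty$, so $m\to\infty$), we get $\frac{m}{2}\phi(p(n))\ge \frac{c}{2}m\to\infty$ along the subsequence. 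I should be slightly careful about whether "unbounded" is meant along a subsequence or as a genuine limit; I will phrase it as: the difference does not remain bounded, i.e. $\limsup = \infty$, which suffices for "unbounded."

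For case (ii), where $p(n)\to 1/2$ and $p = \frac12 - \omega(m^{-1})$, I would use a local (Taylor) estimate of $\phi$ near $1/2$: writing $p = 1/2 - t$ with $t = t(n) > 0$, $t\to 0$, and $t = \omega(m^{-1})$ (i.e. $mt\to\infty$), I expand $\phi(1/2 - t) = (1/2-t)\log\frac{1/2+t}{1/2-t}$. Since $\log\frac{1/2+t}{1/2-t} = \log\frac{1+2t}{1-2t} = 2\cdot(2t) + O(t^3) = 4t + O(t^3)$ as $t\to 0$, we get $\phi(1/2-t) = (1/2 - t)(4t + O(t^3)) = 2t + O(t^2) = 2t(1+o(1))$. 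Therefore $\frac{m}{2}\phi(p(n)) = \frac{m}{2}\cdot 2t(1+o(1)) = mt(1+o(1))\to\infty$ by the assumption $mt\to\infty$. This completes all cases.

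The main obstacle, and the only genuinely delicate point, is case (ii): one must not merely know $p\to 1/2$ but must extract the \emph{rate} $p = 1/2 - \omega(m^{-1})$ and feed it through a second-order expansion of $\phi$, since the first-order term $\frac{m}{2}\phi(p)\approx mt$ is exactly what the hypothesis controls --- a slower approach $p = 1/2 - \Theta(m^{-1})$ would leave the difference bounded, and $p = 1/2 - o(m^{-1})$ would make it vanish, so the threshold is sharp and the Taylor remainder must be controlled carefully (I would verify $O(t^2) = o(t)$ uses $t\to0$, and that the product $(1/2-t)$ only contributes a $(1+o(1))$ factor). Everything else is routine manipulation of the binary entropy function and the standing hypothesis.
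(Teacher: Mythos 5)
Your reading of the statement's sign typo, the identification $\frac{m}{2}\bigl(h(p)-\log\frac{1}{1-p}\bigr)=\frac{m}{2}\,p\log\frac{1-p}{p}\ge 0$ on $[0,1/2]$, and your treatment of the regime $p\to\frac12$ via the expansion $\phi(\tfrac12-t)=2t(1+o(1))$ all match the paper's argument (the paper writes the same thing as $2mpq(1+o(1))\to mq$ with $q=\tfrac12-p$). The problem is in your case (i). You claim that $p(n)\le \tfrac12-\delta$ implies $\phi(p(n))\ge \phi(\tfrac12-\delta)>0$, but $\phi(p)=p\log\frac{1-p}{p}$ is \emph{not} monotone decreasing on $[0,1/2]$: it vanishes at both endpoints ($\phi(0^+)=0=\phi(1/2)$) and is positive only in between. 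So if $p(n)\to 0$ (or merely becomes small along your subsequence), your lower bound $c>0$ is false, and the fallback you invoke --- that the standing hypothesis forces $m\to\infty$ --- is not enough by itself, since $\phi(p)\approx p\log\frac1p$ can decay faster than any fixed relation to $m$ would control; $m\to\infty$ alone does not give $m\phi(p)\to\infty$.

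The conclusion is still true, but the small-$p$ regime needs a different ingredient, which is exactly what the paper supplies: for $p\le p^*\approx 0.2415$ one has the pointwise inequality $p\log\frac{1-p}{p}\ \ge\ \log\frac{1}{1-p}$, and then the standing hypothesis (\ref{best known achievability}) gives $\frac{m}{2}\,p\log\frac{1-p}{p}\ \ge\ \frac{m}{2}\log\frac{1}{1-p}\ \ge\ \log n+\omega(1)\ \to\ \infty$. In other words, when $p$ is bounded away from both $0$ and $1/2$ your concavity/compactness bound plus $m\to\infty$ works (this is the paper's first case), but when $p$ is small you must route the estimate through the hypothesis $\frac{m}{2}\log\frac{1}{1-p}-\log n\to\infty$ rather than through a monotonicity claim about $\phi$. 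If you patch your subsequence argument by splitting it into these two subcases, it goes through, and in fact handles the general ``$p(n)\not\to\frac12$'' hypothesis slightly more carefully than the paper's three-case split, which implicitly assumes $p(n)$ either stays in a fixed interval $[\psi,\tfrac12-\phi]$ or converges.
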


\begin{proof}
    The difference between the LHS of conditions (\ref{set achievability condition}) and (\ref{best known achievability}) is 
    \begin{align} \label{achievability gap}
        \left(\frac{m}{2}h(p) - \log n\right) - \left(\frac{m}{2}\log\frac{1}{1-p} - \log n \right) & = \frac{m}{2} \left(h(p) - \log\frac{1}{1-p}\right) = \frac{m}{2}p\log \frac{1-p}{p},
    \end{align}
    which is clearly positive for all $p \in [0,1/2]$ and $m \in \mathbb{N}$.

    Note that condition (\ref{best known achievability}) implicitly implies that $m\to \infty$ as $n \to \infty$. Observe that the function $\alpha(p) \triangleq p \log \frac{1-p}{p}$ vanishes at $0$ and $1/2$ on $[0,1/2]$, we split into three cases when (i) there are some universal constant $\psi, \phi > 0$ such that $p(n) \in [\psi, 1/2-\phi]$ for sufficiently large $n$; (ii) $p(n) \to 0$ as $n \to \infty$; (iii) $p(n) \to 1/2$ as $n \to \infty$.

    \begin{itemize}
        \item (Case i) Note that $\alpha(p)$ is continuous and concave on $p \in [\psi, 1/2 - \phi]$. For sufficiently large $n$, $p(n) \geq \min\{\alpha(\psi), \alpha(\phi)\} > 0$ is bounded away from $0$. Therefore, (\ref{achievability gap}) is lower bounded by 
        \begin{align*}
            \frac{m}{2}p \log \frac{1-p}{p} \geq \frac{m}{2} \min\{\alpha(\psi), \alpha(\phi)\} \to \infty.
        \end{align*}

        \item (Case ii) For small $p > 0$, we have $p \log \frac{1-p}{p} \geq \log \frac{1}{1-p}$. Specifically, this holds when when $p \leq p^*$, where $p^* \approx 0.2415$ is the unique solution of the equation $(1+p)\log(1-p) - p \log p = 0$. Again we use the fact that condition (\ref{best known achievability}) implicitly implies that $\frac{m}{2}\log \frac{1}{1-p} \to \infty$ as $n \to \infty$, so (\ref{achievability gap}) is lower bounded by 
        \begin{align*}
            \frac{m}{2}p \log \frac{1-p}{p} \geq \frac{m}{2} \log \frac{1}{1-p} \to \infty
        \end{align*}
        for $p \to 0$.

        \item (Case iii) Let $q(n) = 1/2 - p(n)$. Clearly $q(n) \to 0$ as $n \to \infty$. Observe that the asymptotic behavior of (\ref{achievability gap}) is
        \begin{align*}
            \frac{m}{2}p \log \frac{1-p}{p} & = \frac{m}{2}p \log \left(\frac{\frac{1}{2}+q}{\frac{1}{2}-q}\right) \\
            & = \frac{m}{2}p \log \left(1 + \frac{4\lele{q}}{1-2q}\right) \\
            & = \frac{m}{2}p\cdot (4q) (1 + o(1)) \\
            & = 2mpq (1 + o(1)) \\
            & \to mq.
        \end{align*}
        Therefore, if $p = \frac{1}{2} - \omega(m^{-1})$, or equivalently, $q = \omega(m^{-1})$,
        \begin{align*}
            \frac{m}{2}p\log \frac{p}{1-p} \to mq \to \infty.
        \end{align*}
    \end{itemize}
\end{proof}

\begin{remark}
    We point out that the condition $p = \frac{1}{2} - \omega(m^{-1})$ is necessary for a strict improvement. Consider the triplet $(n,m(n),p(n))$ given by $(n,n/2, 1/2 - 1/n)$, which satisfy (\ref{best known achievability}) as 
    \begin{align*}
        \frac{m}{2}\log\frac{1}{1-p} - \log n & = \frac{n}{4}\log \frac{1}{\frac{1}{2}+\frac{1}{n}} - \log n = \frac{n}{4}\log \left(\frac{2n}{n+2}\right) - \log n \approx \frac{\log 2}{4}n - \log n \to \infty.
    \end{align*}
    However, the gap in (\ref{achievability gap})
    \begin{align*}
        \frac{m}{2}p\log \frac{1-p}{p} & = \frac{np}{4} \log \frac{\frac{1}{2}+\frac{1}{n}}{\frac{1}{2}-\frac{1}{n}} = \frac{p}{4}\log \left(1+\frac{4}{n-2}\right)^n \to \frac{1}{8} \log e^4= \frac{1}{2}
    \end{align*}
    is bounded, which does not lead to a strict improvement.
\end{remark}

\subsection{Comparison between Converse Results} \label{sec: comparison converse}

In this section, we show that our new converse region (\ref{new converse region 1}) in Theorem \ref{thm: converse} is a \lele{strict} improvement of the converse region (\ref{old converse region 1}) of the converse region (\ref{old converse region 1}) in Theorem \ref{thm: best-known}. In particular, our new converse region doesn't have any restrictions on $m$. We now formally state the improvement on the converse region in the following lemma.

\begin{lemma} \label{lemma: converse improvement}
    Any triplet sequence $(n,m(n),p(n))$ that satisfies condition (\ref{old converse region 1}) also satisfies condition (\ref{new converse region 1}).
\end{lemma}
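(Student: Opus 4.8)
\textbf{Proof proposal for Lemma \ref{lemma: converse improvement}.}

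The plan is to show directly that the two-part hypothesis \eqref{old converse region 1}, namely $m^2 p = O(1)$ together with $m = O(\sqrt{n})$, forces the single condition \eqref{new converse region 1}, i.e.\ $\frac{m}{2}h(p) - \log\frac{n}{m} \to -\infty$. Since \eqref{new converse region 1} is a statement about a sequence tending to $-\infty$, it suffices to upper bound $\frac{m}{2}h(p)$ by something that is $O(1)$ and to lower bound $\log\frac{n}{m}$ by something that tends to $+\infty$; combining the two gives the claim. I expect no single hard obstacle here, only the need to track the two regimes of $m$ carefully.

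First I would handle the term $\frac{m}{2}h(p)$. Recall we have reduced to $p \in [0,1/2]$, and for small $p$ the bound $h(p) \le p\log\frac{1}{p} + p \le p(1 + \log\frac{1}{p})$ holds; more crudely, there is a universal constant $C$ with $h(p) \le C\, p \log\frac{e}{p}$ for $p \in (0,1/2]$. Under $m^2 p = O(1)$ we have $p = O(m^{-2})$, hence $mp = O(m^{-1}) = O(1)$ and in fact $mp \to 0$ unless $m$ is bounded. Then
\begin{align*}
    \frac{m}{2} h(p) \;\le\; \frac{C}{2}\, m p \log\frac{e}{p} \;=\; \frac{C}{2}\,(mp)\left(1 + \log\frac{1}{p}\right),
\end{align*}
and since $p = O(m^{-2})$ gives $\log\frac1p = O(\log m)$ while $mp = O(m^{-1})$, the product is $O\!\big(m^{-1}\log m\big) = o(1)$. (If instead $m$ stays bounded along a subsequence, then $\frac{m}{2}h(p)$ is trivially bounded because $h \le \log 2$.) In either case $\frac{m}{2}h(p) = O(1)$.

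Next I would handle $\log\frac{n}{m}$. The condition $m = O(\sqrt n)$ gives $n/m \ge c\sqrt n$ for some constant $c>0$ and all large $n$, hence $\log\frac{n}{m} \ge \tfrac12 \log n + \log c \to \infty$. Combining the two bounds,
\begin{align*}
    \frac{m}{2}h(p) - \log\frac{n}{m} \;\le\; O(1) - \tfrac12\log n - \log c \;\to\; -\infty,
\end{align*}
which is exactly \eqref{new converse region 1}. The only subtlety worth a sentence in the writeup is the case distinction on whether $m$ is bounded, which is why I would phrase the bound on $\frac{m}{2}h(p)$ using $h \le \log 2$ as a universal fallback; after that the argument is a two-line estimate. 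I do not anticipate a real obstacle, as the new region was designed precisely to subsume the old one; the task is bookkeeping.
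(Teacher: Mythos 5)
Your overall route is the same as the paper's: bound the entropy term using $m^2p=O(1)$ together with the estimate $h(p)\le p\log\frac1p+p$, and use $m=O(\sqrt n)$ to force $\log\frac nm\ge\frac12\log n-O(1)$; the paper merely organizes the same estimates into three cases ($p\ne o(1)$; $p=o(1)$ with $m$ bounded; $p=o(1)$ with $m$ unbounded) instead of your split on whether $m$ is bounded. One justification in your write-up is stated incorrectly, though: from $p=O(m^{-2})$ you claim $\log\frac1p=O(\log m)$, but an upper bound on $p$ gives a \emph{lower} bound on $\log\frac1p$ (e.g.\ $p=e^{-m}$ satisfies $m^2p=O(1)$ yet $\log\frac1p=m$). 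The conclusion $\frac m2 h(p)=O(1)$ is nevertheless correct; the clean repair is the one the paper uses implicitly: since $x\mapsto x\log\frac1x$ (equivalently $h$) is increasing on $(0,1/e]$ (resp.\ $[0,1/2]$), for large $m$ one has $p\log\frac1p\le \frac{C}{m^2}\log\frac{m^2}{C}$, hence $mp\log\frac1p\le \frac{C}{m}\bigl(2\log m-\log C\bigr)=o(1)$, with your $h\le\log 2$ fallback covering bounded $m$. With that one-line fix your argument is complete and matches the paper's in substance, ending with $\frac m2 h(p)-\log\frac nm\le O(1)-\frac12\log n+O(1)\to-\infty$, which is exactly \eqref{new converse region 1}.
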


\begin{proof}
    Let $(n,m(n), p(n))$ be a triplet sequence that satisfies condition (\ref{old converse region 1}). Then we can find some universal constants $C, C' > 0$ such that 
    \begin{align*}
        m^2p \leq C \qquad \text{ and } \qquad m \leq C' \sqrt{n}
    \end{align*}
    for large $n$. We split into three cases when (i) $p \neq o(1)$; (ii) $p = o(1)$ and $m$ is bounded; (iii) $p = o(1)$ and $m$ is unbounded.

    \begin{itemize}
        \item (Case i) Since $p \neq o(1)$, we can find some universal constant $\eta > 0$ such that $p(n) \geq \eta$ for sufficiently large $n$. Therefore, $m \leq \sqrt{\frac{C}{p}} \leq \sqrt{\frac{C}{\eta}} \triangleq \tilde{C}$ is bounded. Therefore, (\ref{new converse region 1}) holds as
        \begin{align*}
            \frac{m}{2}h(p) - \log \frac{n}{m} \leq \frac{m}{2} - \log n + \log m \leq \frac{\tilde{C}}{2} + \log \tilde{C} - \log n \to -\infty.
        \end{align*}

        \item (Case ii) Note that the binary entropy function $h(p)$ can be upper bounded by 
        \begin{align} \label{binary entropy upper bound}
            h(p) = -p \log p - (1-p) \log (1-p) \leq p \log \frac{1}{p}.
        \end{align}
        Also, note that $h(p)$ is increasing on $p \in [0,1/2]$, we have
        \begin{align*}
            h(p) \leq h\left(\frac{C}{m^2}\right) \leq \frac{C}{m^2}\log \frac{m^2}{C}
        \end{align*}
        by the approximation in (\ref{binary entropy upper bound}). Now observe that 
        \begin{align}\label{inequality for converse improvement}
            \frac{m}{2}h(p) - \log \frac{n}{m} & \leq \frac{m}{2}\cdot\frac{C}{m^2}\log\frac{m^2}{C} - \log n + \log m  = \left(\frac{C}{m} + 1 \right) \log m - \frac{C}{2m} \log C- \log n.
        \end{align}

        Since $m$ is bounded, there is some universal constant $\tilde{C}'$ such that $\left(\frac{C}{m}+1\right) \log m - \frac{C}{2m}\log C \leq \tilde{C}'$ for all $n$. Therefore, (\ref{new converse region 1}) holds as
        \begin{align*}
            \frac{m}{2}h(p) - \log \frac{n}{m} \leq \tilde{C}' - \log n \to -\infty.
        \end{align*}

        \item (Case iii) The inequality in (\ref{inequality for converse improvement}) still holds for $p =o(1)$ and unbounded $m$. When $m$ is unbounded, $\frac{C}{m} = o(1)$. Since $m \leq C' \sqrt{n}$,
        \begin{align*}
            \left(\frac{C}{m} + 1 \right) \log m - \frac{C}{2m} \log C- \log n & \leq (1 + o(1)) \log C'\sqrt{n} - \frac{C}{2m} \log C - \log n \\
            & \leq \frac{1+o(1)}{2}\log n + (1+o(1)) \log C' - o(1) - \log n \\
            & = -\frac{1}{2}\log n + o(1) \\
            & \to -\infty,
        \end{align*}
        which completes the proof.
    \end{itemize}
\end{proof}
}

\section{Justification for the Achievability-Converse Pairs} \label{app: justification for ac pairs}

\mic{In this section, we first justify Theorem \ref{thm: ac pair for set recovery} and Theorem \ref{thm: ac pair for exact permutation recovery}, which state a bidirectional fundamental limit statement for exact set recovery and exact permutation recovery, respectively.}

\mic{Both the achievability and converse statements in Theorem \ref{thm: ac pair for set recovery} is simply collecting conditions from Theorem \ref{thm: set recovery achievability} and Theorem \ref{thm: converse} respectively. Also, the achievability statement in Theorem \ref{thm: ac pair for exact permutation recovery} comes from Theorem \ref{thm: perm recovery achievability}. Therefore, in the following, we focus on the converse statement in Theorem \ref{thm: ac pair for exact permutation recovery}.
}
\mic{Note that} 
\begin{align*}
    \textcolor{black}{\Pr(\mathcal S = \hat{\mathcal S}, \pi = \hat{\pi}) \leq \min\{\Pr(\mathcal S = \hat{\mathcal S}), \Pr(\pi = \hat{\pi}) \}.}
\end{align*}
\mic{If either $\Pr(\mathcal S = \hat{\mathcal S})\not \to1$ or $\Pr(\pi = \hat{\pi}) \not\to 1$, then $\Pr(\mathcal S = \hat{\mathcal S}, \pi = \hat{\pi}) \not\to 1$. Applying Theorem \ref{thm: converse} and Lemma \ref{lemma: wright}, we know}
\begin{align*} 
    \mic{\frac{m}{2}h(p) - \log \frac{n}{m} \to -\infty} & \mic{\implies \Pr(\mathcal S = \hat{\mathcal S}) \not\to 1,} \\
    \mic{mp - \log m \to -\infty} & \mic{\implies \Pr(\pi = \hat{\pi}) \not \to 1.}
\end{align*}
\mic{It follows that if any of the conditions in (\ref{ac pair converse permutation recovery}) holds, then $\Pr(\mathcal S = \hat{\mathcal S}, \pi = \hat{\pi})$ is bounded away from 1, and hence constitutes a converse for exact permutation recovery.}

\section{Proof of Lemma \ref{lemma: complement graph}} \label{app: proof of complement graph lemma}

Let $G = ([n], E_G)$. Then
\begin{itemize}
    \item $H = (\mathcal S, \{\{u,v\} \in E_G \mid u, v \in \mathcal S\})$;
    \item $\tilde{H} = (\mathcal S, \{\{u,v\} \in \binom{[n]}{2} \setminus E_G \mid u, v \in \mathcal S\})$;
    \item $H_\pi = ([m], \{\{\pi(u), \pi(v)\} \mid \{u,v\} \in E_H\})$;
    \item $H_\pi^c = ([m], \{\{\pi(u), \pi(v)\} \in \binom{[m]}{2} \setminus E_{H_\pi}\})$.
\end{itemize}

It remains to show that $\{u,v \}\in E_{\tilde{H}}$ if and only if $\{\pi(u), \pi(v)\} \in E_{H_\pi^c}$, which holds since
\begin{align*}
    \{u,v\} \in E_{\tilde{H}} & \iff \{u, v\} \notin E_H 
    \iff \{\pi(u), \pi(v)\} \notin E_{H_\pi} \iff \{\pi(u), \pi(v)\} \in E_{H_\pi^c}.
\end{align*}

\section{Proof of Lemma \ref{lemma: p <= 1/2}} \label{app: proof of p <= 1/2}

Let $N_2 = \binom{n}{2}$. Label the edges of a complete graph on $n$ vertices by $e_1, \ldots,e_{N_2}$. For each $1 \leq j \leq N_2$, let $X_j$ be the binary random variable indicating the presence of an edge $e_j$. For $G \sim \mathrm{ER}(n,p)$, the edge variables $\{X_j\}_{j=1}^{N_2}$ are i.i.d.~$\mathrm{Bern}(p)$. Then the random variables $\{1 - X_j\}_{j=1}^{N_2}$ are i.i.d.~$\mathrm{Bern}(1-p)$, which corresponds to the edge variables of the complement graph $G^c$. In other words, $G^c \sim \mathrm{ER}(n,1-p)$.

\section{Proof of Lemma \ref{lemma: close form expectation}} \label{app: proof of close form}

\begin{proof}
   Let $\mathcal S \subseteq [n]$ with $|\mathcal S| = v_H \triangleq m$. Without loss of generality, we may assume $\mathcal S = [m]$. From Definition~\ref{defn:XH}, $$\mathbb{E}(X_H) = \binom{n}{v_H} \Pr(G[m]\cong H).$$

   To compute $\Pr(G[m]\cong H)$, we want to count the number of subgraphs on the vertex set $[m]$ that are isomorphic to $H$. For each permutation $\sigma: [m] \to [m]$, let $H_\sigma$ denote the graph obtained by permuting the vertices of $H$ by $\sigma$ and an edge $\{\sigma(u),\sigma(v)\}$ is connected in $H_\sigma$ iff the edge $\{u,v\}$ is connected in $H$. Since each $H_\sigma \cong H$, this leads to exactly $v_H!$ graphs on $[m]$ that is isomorphic to $H$.
   
   However, some of them are double-counted, because the adjacency matrices of $H$ and $H_{\sigma}$ are identical (under the same vertex ordering) iff $\sigma$ is an automorphism of $H$. Therefore, the number of distinct subgraphs on $[m]$ that are isomorphic to $H$ is $\frac{v_H!}{\mathrm{Aut}_H}$. According to the Erd\"os-R\'enyi measure, each realization occurs with probability $p^{e_H}(1-p)^{\binom{v_H}{2} - e_H}$, so we have $\Pr(G[m] \cong H) = \frac{v_H!}{\mathrm{Aut}_H}p^{e_H}(1-p)^{\binom{v_H}{2} - e_H}$. The result follows from the number of choices of $\mathcal S$.
\end{proof}



\ifCLASSOPTIONcaptionsoff
  \newpage
\fi



%
\bibliographystyle{IEEEtran}
\bibliography{references}
%








\end{document}